\definecolor{darkblue}{rgb}{0, 0, 0.8}
\theoremstyle{plain}
\newtheorem{theorem}{Theorem}[section]
\newtheorem{corollary}[theorem]{Corollary}
\theoremstyle{definition}
\newtheorem{definition}[theorem]{Definition}
\theoremstyle{remark}
\newtheorem{remark}[theorem]{Remark}
\newcommand{\explain}[2]{\underset{\mathclap{\overset{\uparrow}{#2}}}{#1}}
\def\R{\mathbb{R}}
\def\cA{\mathcal{A}}
\def\cM{\mathcal{M}}
\def\cN{\mathcal{N}}
\def\cX{\mathcal{X}}
\newcommand{\yw}[1]{\textcolor{purple}{\textbf{[yuxiang: #1]}}}
\newcommand{\E}{\mathbb{E}}
\newcommand{\watermark}{{\mathsf{Watermark}}}
\newcommand{\detect}{{\mathsf{Detect}}}
\newcommand{\aux}{{\mathsf{aux}}}
\newcommand{\inlineeqnum}{\refstepcounter{equation}~~\mbox{(\theequation)}}
\title{Invisible Image Watermarks Are Provably Removable Using Generative AI}
\author{%
  Xuandong Zhao\thanks{Equal contribution. Email addresses: \texttt{xuandongzhao@berkeley.edu}~~\texttt{kexunz@andrew.cmu.edu}} \\
  UC Berkeley \\
  \And
  Kexun Zhang\footnotemark[1] \\
  Carnegie Mellon University \\
  \And
  Zihao Su \\
  UC Santa Barbara \\
  \AND
  Saastha Vasan \\
  UC Santa Barbara \\
  \And
  Ilya Grishchenko \\
  UC Santa Barbara \\
  \And
  Christopher Kruegel \\
  UC Santa Barbara \\
  \And
  Giovanni Vigna \\
  UC Santa Barbara \\
  \AND
  Yu-Xiang Wang \\
  UC San Diego \\
  \And
  Lei Li \\
  Carnegie Mellon University \\
}
\begin{document}

\maketitle

\begin{abstract}
Invisible watermarks safeguard images' copyrights by embedding hidden messages only detectable by owners. They also prevent people from misusing images, especially those generated by AI models.
We propose a family of regeneration attacks to remove these invisible watermarks. 
The proposed attack method first adds random noise to an image to destroy the watermark and then reconstructs the image. 
This approach is flexible and can be instantiated with many existing image-denoising algorithms and pre-trained generative models such as diffusion models. Through formal proofs and extensive empirical evaluations, we demonstrate that pixel-level invisible watermarks are vulnerable to this regeneration attack.
Our results reveal that, across four different pixel-level watermarking schemes, the proposed method consistently achieves superior performance compared to existing attack techniques, with lower detection rates and higher image quality.
However, watermarks that keep the image semantically similar can be an alternative defense against our attacks.
Our finding underscores the need for a shift in research/industry emphasis from invisible watermarks to semantic-preserving watermarks. Code is available at \url{https://github.com/XuandongZhao/WatermarkAttacker}.
\end{abstract}

\section{Introduction}
\label{sec:intro}
Generative models like DALL-E \citep{ramesh2022hierarchical}, Imagen \citep{saharia2022photorealistic}, and Stable Diffusion \citep{rombach2022high} can produce images that are often visually indistinguishable from those created by human photographers and artists, potentially leading to misunderstandings and false beliefs due to their visual similarity. To address this, government leaders \citep{Kelly_2023_WH, whitehouse, Kelly_2023cg} advocate for the responsible use of AI, emphasizing the importance of identifying AI-generated content. In response, leading AI companies such as Google \citep{Google2023IO}, Microsoft \citep{ms2023}, and OpenAI \citep{Bartz_Hu_2023} have pledged to incorporate \emph{watermarks} into their AI-generated images. \emph{Invisible} watermarks \citep{wolfgang1996watermark,ghazanfari2011lsb++,cox2002first} are preferred as they preserve image quality and are less likely to be removed by laypersons. However, abusers are aware of these watermarks and may attempt to remove them, making it crucial for invisible watermarks to be robust against evasion attacks.

\begin{figure}[t]
    \centering
    \includegraphics[width=0.99\linewidth]{fig/wmk_fig.pdf}
    \caption{Removing invisible watermarks: The proposed attack first maps the watermarked image to its embedding, which is another representation of the image. Then the embedding is noised to destruct the watermark. After that, a regeneration algorithm reconstructs the image from the noisy embedding.}
    \label{fig:pipe}
\end{figure}

Existing attacks can be categorized into two types. The first is \emph{destructive}, where the watermark is removed by corrupting the image. Typical destructive attacks include modifying the brightness, JPEG compression, and adding Gaussian noise. These approaches are effective at removing watermarks, but they result in significant quality loss.  The second type of attack is \emph{constructive}, where the watermark is treated as some noise on the original image and removed by purifying the image. Constructive attacks include image-denoising techniques like Gaussian blur~\citep{hosam2019attacking}, BM3D~\citep{dabov2007image}, and learning-based approaches~\citep{zhang2017beyond}. However, they cannot remove resilient watermarks easily. To counter these attacks, learning-based watermarking methods~\citep{Zhang2019RobustIV, zhang2019steganogan,Fernandez2021WatermarkingII} were proposed to explicitly train against the known attacks to be robust. But how about other attacks? What is the end of this cat-and-mouse game?
In this paper, we ask a more fundamental question: 
\begin{center}
\emph{Is an invisible watermark \emph{necessarily} non-robust?}
\end{center} 
To be more precise, is there a fundamental trade-off between the invisibility of a watermark and its resilience to any attack that preserves the image quality to a certain-level?

To address this question, we propose a regeneration attack that leverages the strengths of both destructive and constructive approaches. The pipeline of the attack is given in Figure \ref{fig:pipe}.
Our attack first corrupts the image by adding Gaussian noise to its latent representation. 
Then, we reconstruct the image from the noisy embedding using a generative model.
The proposed regeneration attack is flexible in that it can be instantiated with various regeneration algorithms, including traditional denoisers and deep generative models such as diffusion models \citep{ho2020denoising}. 
Ironically, the recent advances in generative models that created the desperate need for invisible watermarks are also making watermark removal easier when integrated into the proposed regeneration attack.

Surprisingly, we prove that the proposed attack guarantees the removal of certain invisible watermarks such that \emph{no} detection algorithm could work. In other words, our attack is provably effective in removing any watermarks that perturb the image within a limited range of $\ell_2$-distance, regardless of whether they have been proposed or have not yet been invented. We also show that our attack maintains image quality comparable to the unwatermarked original.

To validate our theory, we conduct extensive experiments on four widely used invisible watermarks~\citep{navas2008dwt, Zhang2019RobustIV, tancik2020stegastamp, Fernandez2021WatermarkingII}. Our proposed attack method significantly outperforms five baseline methods in terms of both image quality and watermark removal. For the resilient watermark scheme RivaGAN, our regeneration attacks successfully remove 98\% of the invisible watermarks while maintaining a PSNR above 30 compared to the original images. With the empirical results and the theoretical guarantee, we claim that pixel-level invisible image watermarks are vulnerable to our regeneration attacks.

Given the vulnerability of invisible watermarks, we explore another option: {\it semantic} watermarks. Semantic watermarks do not limit the perturbation to be within an $\ell_2$-distance. As long as the watermarked image looks similar and contains similar content, it is considered suitable for use. One instance of such semantic watermarks is Tree-Ring \citep{wen2023tree}, which has shown resilience against our attacks.  While not a perfect solution, as the watermark becomes somewhat ``visible'', semantic watermarking offers a potential path forward for protecting the proper use of AI-generated images when invisible watermarks are provably ineffective.

Our contributions can be summarized as follows:
\begin{itemize}[leftmargin=*, itemsep=0pt, topsep=0pt]
\item We propose a family of regeneration attacks for image watermark removal that can be instantiated with many existing denoising algorithms and generative models.
\item We prove that the proposed attack is guaranteed to remove certain pixel-based invisible watermarks and that the regenerated images are close to the original unwatermarked image.
\item We evaluate the proposed attack on various invisible watermarks to demonstrate their vulnerability and its effectiveness compared with strong baselines.
\item We explore other possibilities to embed watermarks in a visible yet semantically similar way. Empirical results indicate that this approach works better under our attack and is worth investigating as an alternative.
\end{itemize}

\section{Related Work and Background}
\subsection{Related Work}

\textbf{Image watermarking and steganography.}
Steganography and invisible watermarking are key techniques in information hiding, serving diverse purposes such as copyright protection, privacy-preserved communication, and content provenance. Early works in this area employ hand-crafted methods, such as Least Significant Bit (LSB) embedding \citep{wolfgang1996watermark}, which subtly hides data in the lowest order bits of each pixel in an image. Over time, numerous techniques have been developed to imperceptibly embed secrets in the spatial \citep{ghazanfari2011lsb++} and frequency \citep{holub2012designing, pevny2010using} domains of an image. Additionally, the emergence of deep learning has contributed significantly to this field. Deep learning methods offer improved robustness against noise while maintaining the quality of the generated image. SteganoGAN \citep{zhang2019steganogan} uses generative adversarial networks (GAN) for steganography and perceptual image optimization. RivaGAN \citep{Zhang2019RobustIV}, further improves GAN-based watermarking by leveraging attention mechanisms.  SSL watermarking \citep{Fernandez2021WatermarkingII}, trained with self-supervision, enhances watermark features through data augmentation. Stable Signature \citep{fernandez2023stable} fine-tunes the decoder of Latent Diffusion Models to add the watermark. Tree-Ring \citep{wen2023tree} proposes a semantic watermark, which watermarks generative diffusion models using minimal shifts of their output distribution. This work focuses on the removal of invisible watermarks, as opposed to visible watermarks, for several reasons. Visible watermarks are straightforward for potential adversaries to visually identify and locate within an image. Additionally, removal techniques for visible watermarks are already extensively studied in prior work such as  \citep{liu2021wdnet, hertz2019blind, cun2021split}. In contrast, invisible watermarks do not have explicit visual cues that reveal their presence or location. Developing removal techniques for imperceptible embedded watermarks presents unique challenges.

\textbf{Deep generative models.}
The high-dimensional nature of images poses unique challenges to generative modeling.
In response to these challenges, several types of deep generative models have been developed, including Variational Auto-Encoders (VAEs) \citep{vincent2008extracting, van2017neural}, Generative Adversarial Networks (GANs) \citep{goodfellow2020generative}, flow-based generative models \citep{rezende2015variational}, and diffusion models \citep{ho2020denoising, rombach2022high}.
These models leverage deep latent representations to generate high-quality synthetic images and approximate the true data distribution. One particularly interesting use of generative models is for data purification, i.e., removing the adversarial noise from a data sample. The purification is similar to watermark removal except that purification is a defense strategy while watermark removal is an attack. The diffusion-based approach in \cite{nie2022diffusion} is similar to an instance of our regeneration attack, but the usage is different in our paper and our theoretical guarantee of watermark removal is stronger. In this paper, we aim to demonstrate the capability of these deep generative models in removing invisible watermarks from images by utilizing the latent representations obtained through the encoding and decoding processes.

\label{sec:background}

\subsection{Problem Setup}
This section defines \emph{invisible} watermarks and the properties of an algorithm for their detection. It then discusses the threat model for the removal of invisible watermarks. 

\begin{definition}[Invisible watermark]\label{def:iwm}
Let $x\in \cX$ be the original image and $x_w=\watermark(x,\aux)$ be the watermarked image for a watermarking scheme that is a function of $x$ and any auxiliary information $\aux$, e.g., a secret key.  
We say that the watermark is $\Delta$-\emph{invisible} on a clean image $x$ w.r.t. a ``distance'' function $\mathsf{dist}:\cX \times \cX \rightarrow \R_+$ if $\mathsf{dist}(x,x_w) \leq \Delta$.
\end{definition}

\begin{definition}[Watermark detection]\label{def:wmd}
A watermark detection algorithm $\detect:\cX \times \aux\rightarrow \{0,1\}$ determines whether an image $\tilde{x}\in\cX$ is watermarked with the auxiliary information (secret key) being $\aux$. $\detect$ may make two types of mistakes, false positives (classifying an unwatermarked image as watermarked) and false negatives (classifying a watermarked image as unwatermarked). $\tilde{x}$ could be drawn from either the null distribution $P_0$  or watermarked distribution $P_1$.  We define Type I error (or false positive rate) $\epsilon_1:=\Pr_{x\sim P_0}[\detect(x)=1]$ and Type II error (or false negative rate) $\epsilon_2:= \Pr_{x\sim P_1}[\detect(x)=0]$.
\end{definition}
A watermarking scheme is typically designed such that $P_1$ is different from $P_0$ so the corresponding~(carefully designed) detection algorithm can distinguish them almost perfectly, that is, to ensure $\epsilon_1$ and $\epsilon_2$ are nearly $0$. An attack on a watermarking scheme aims at post-processing a possibly watermarked image which changes both $P_0$ and $P_1$ with the hope of increasing Type I and Type II error at the same time, hence evading the detection.

We consider the following threat model for removing invisible watermarks from images:

\textbf{Adversary's capabilities.} We assume that an adversary only has access to the watermarked images. The watermarking scheme $\watermark$, the auxiliary information $\aux$, and the detection algorithm $\detect$ are unknown to the adversary. The adversary can make modifications to these already watermarked images it has access to using arbitrary side information and computational resources, but it cannot rely on any specific property of the watermarking process and it cannot query $\detect$.

\textbf{Adversary's objective.} The primary objective of an adversary is to render the watermark detection algorithm ineffective. Specifically, the adversary aims to produce an image $\Tilde{x}$ from the watermarked image $x_w$ which causes the $\detect$ algorithm to always have a high Type I error (false positive rate) or a high Type II error (false negative rate).
Simultaneously, the output image $\Tilde{x}$ should maintain comparable quality to the original, non-watermarked image.  The adversary's objective will be formally defined later.

\subsection{Invisible Watermark Detection}\label{sec:wm_detect}
Watermarking methods implant $k$-bit secret information into images. The detection algorithm $\detect$ uses an extractor for this hidden data, applying a statistical test to ascertain if the extracted message matches the secret. This test evaluates the number of matching bits $M(m, m')$ between the extracted $m'$ and original $m \in \{0,1\}^k$ messages. A watermark is detected if $M(m, m') \geq \tau$ for a predefined threshold $\tau$~\citep{yu2021artificial, lukas2023ptw}. Formally, we distinguish whether an image $x$ was watermarked ($H_1$) or not ($H_0$). Assuming under null hypothesis $H_0$ that the extracted bits behave like independent Bernoulli variables with a 0.5 probability, $M(m,m')$ follows a binomial distribution $B(k, 0.5)$. The false positive rate ($\epsilon_1$) corresponds to the probability that $M(m,m')$ exceeds $\tau$. This has a closed form using the regularized incomplete beta function $I_x(a;b)$:
$$
\epsilon_1(\tau) =\mathbb{P}\left(M\left(m, m^{\prime}\right)>\tau \mid H_0\right)=\frac{1}{2^k} \sum_{i=\tau+1}^k
{k \choose i} \nonumber =I_{1 / 2}(\tau+1, k-\tau) .
$$

\textbf{Decision threshold.}
We consider a watermark to be detected, if we can reject the null hypothesis $H_0$ with a $p$-value less than $0.01$. 
In practice, for a $k=32$-bit watermark, we require at least $23$ bits to be extracted correctly in order to confirm the presence of a watermark. This provides a reasonable balance between detecting real watermarks and avoiding false positives.

\section{The Proposed Regeneration Attack}
\label{sec:approach}
Our attack method first destructs a watermarked image by adding noise to its representation, and then reconstructs it from the noised representation.

Specifically, given an embedding function $\phi: \R^n\rightarrow \R^d$, a regeneration function $\cA: \R^d \rightarrow \R^n$, and a noise level $\sigma$, the attack algorithm takes a watermarked image $x_w\in \R^n$ and returns 
\begin{equation}\label{eq:algo}
\hat{x} =  \underbrace{\cA\Big( \overbrace{\phi(x_w) + \cN(0,\sigma^2 I_d)}^{\text{destructive}}\Big)}_{\text{constructive}}.
\end{equation}
The first step of the algorithm is \emph{destructive}. It maps the watermarked image $x_w$ to an embedding $\phi(x_w)$ (which is a possibly different representation of the image), and adds i.i.d. Gaussian noise. The explicit noise shows the destructive nature of the first step. The second step of the algorithm is \emph{constructive}. The corrupted image representation $\phi(x_w)+\cN(0,\sigma^2 I_d)$ is passed through a regeneration function $\cA$ to reconstruct the original clean image.

There are various different choices for $\phi$ and $\cA$ that can instantiate our attack. $\phi$ can be as simple as identity map, or as complicated as deep generative models including variational autoencoders~\citep{kingma2013auto}. $\cA$ can be traditional denoising algorithms from image processing and recent AI models such as diffusion~\citep{ho2020denoising}. The choice of $\phi$ and $\cA$ may change the empirical results, but it does not affect the theoretical guarantee. In the following sections, we introduce three combinations of $\phi$ and $\cA$ to instantiate the attack. Among the three, the diffusion instantiation is the most complicated and we describe it with pseudocode in Algorithm \ref{algo}.

\subsection{Attack Instance 1: 
Identity Embedding with Denoising Reconstruction
%$\phi=\mathsf{id},\cA=\mathsf{denoising\ algorithm}$
}\label{sec:denoising_algos}

Set $\phi$ to be identity map, then $\mathcal{A}$ can be any image denoising algorithm, e.g., BM3D \citep{dabov2007image}, TV-denoising \citep{rudin1992nonlinear}, bilateral filtering \citep{tomasi1998bilateral}, DnCNNs \citep{zhang2017beyond}, or a learned natural image manifold \citep{buades2005non, dabov2007image, zhang2017beyond, zhang2021plug}. A particular example of interest is a ``denoising autoencoder'' \citep{vincent2010stacked}, which takes $\phi$ to be identity, adds noise to the image deliberately, and then denoises by attempting to reconstruct the image. Observe that for ``denoising autoencoder'' we do not need to add additional noise.

\subsection{
Attack Instance 2: VAE Embedding and Reconstruction
%$\phi=\mathsf{VAE\ encoder}, \cA=\mathsf{VAE\ decoder}$
}\label{sec:vae}

The regeneration attack in Equation \ref{eq:algo} can be instantiated with a variational autoencoder (VAE). A VAE \citep{kingma2013auto} consists of an encoder $q_\phi(z|x)$ that maps a sample $x$ to the latent space $z$ and a decoder $p_\theta(x|z)$ that maps a latent $z$ back to the data space $x$. Both the encoder and decoder are parameterized with neural networks. VAEs are trained with a reconstruction loss that measures the distance from the reconstructed sample to the original sample and a prior matching loss that restricts the latent to follow a pre-defined prior distribution.

Instead of mapping $x$ directly to $z$, the encoder maps it to the mean $\mu(x)$ and variance $\sigma(x)$ of a Gaussian distribution and samples from it. Therefore, VAE already adds noise during the encoding stage (though its variance depends on the sample $x$, which is not exactly the same as defined in Equation \ref{eq:algo}), so there is no need to add extra noise. Note that this is similar to the situation of denoising autoencoders described in Section \ref{sec:denoising_algos}, as the denoising autoencoder is a trivial case of VAE where $\mu(x)$ is identity.

\begin{algorithm}[t]
\begin{algorithmic}[1]
\INPUT The watermarked image $x_w$, a time step $t^*$ determining the level of noise added.
\OUTPUT A reconstructed clean image $\hat x$.
\STATE $z_0\leftarrow \phi(x_w)$ \textcolor{gray}{// map the watermarked image $x_w$ to latent space}
\STATE $\epsilon \sim \mathcal{N}(0, I_d)$ \textcolor{gray}{// sample a random normal Gaussian noise}
\STATE $z_{t^*}\leftarrow \sqrt{\alpha(t^*)}z_0+\sqrt{1-\alpha(t^*)}\epsilon$ \textcolor{gray}{// add noise to the latent, noise level determined by $t^*$}
\STATE $\hat z_0\leftarrow \mathsf{solve}(z_{t^*}, t^*, s, f, g)$ \textcolor{gray}{// denoise the noised latent to reconstruct a clean latent}
\STATE $\hat x\leftarrow \theta(\hat z_0)$ \textcolor{gray}{// map the reconstructed latent back to a watermark-free image}
\STATE \textbf{return} $\hat x$
\end{algorithmic}
\caption{Regeneration Attack Instance: Removing invisible watermarks with a diffusion model}
\label{algo}
\end{algorithm}

\subsection{Attack Instance 3: Diffusion Embedding and Reconstruction
% $\phi=\mathsf{VAE\ encoder}, \cA=\mathsf{diffusion\ model}$
}\label{sec:diffusion}

The regeneration attack can also be instantiated with diffusion models. Diffusion models \citep{ho2020denoising} define a generative process that learns to sample from an unknown true distribution $p(z_0)$.
This process is learned by estimating original samples from randomly noised ones.
In other words, diffusion models are trained to denoise, which makes them candidates for the regeneration function $\mathcal{A}$ in the proposed attack.
The embedding function $\phi$ can either be identity \citep{ho2020denoising} or a latent embedding \citep{rombach2022high}, depending on the space where diffusion is trained.

For diffusion models, the process of adding noise to a clean sample is known as the {\it forward process}. Likewise, the process of denoising a noisy sample is known as the {\it backward process}. The forward process is defined by the following stochastic differential equation (SDE): 
% \begin{equation}\small\label{eq:fwd}
% d z=f(z, t) d t+g(t) d w,
% \end{equation}
$d z=f(z, t) d t+g(t) d w\inlineeqnum\label{eq:fwd},$
where $t \in[0,1], z \in \mathbb{R}^d$, $w(t) \in \mathbb{R}^d$ is a standard Wiener process, and $f,g$ are real-valued functions. The backward process can then be described with its reverse SDE:
% \begin{equation}\small\label{eq:bwd}
% d \hat{z}=\left[f(\hat{z}, t)-g(t)^2 \nabla_{\hat{z}} \log p_t(\hat{z})\right] d t+g(t) d \bar{w},
% \end{equation}
$
d \hat{z}=\left[f(\hat{z}, t)-g(t)^2 \nabla_{\hat{z}} \log p_t(\hat{z})\right] d t+g(t) d \bar{w}\inlineeqnum\label{eq:bwd},
$
where $\hat w$ is a reverse Wiener process.
Diffusion models parameterize $\nabla_{\hat{z}} \log p_t(\hat{z})$ with a neural network $s(z,t)$.
%and train it by minimizing the following loss
% \begin{equation}\small
% \int_0^1\mathbb{E}_{p(z_0)p_t(z_t|x_0)}\left [\lambda(t)\Vert\log  \nabla_{z_t} p_t(z_t|x_0)-s(z,t)\Vert_2^2\right ],
% \end{equation}
% where $\lambda(t)$ assigns different weights to different time steps.
By substituting $s(z,t)$ into Equation~\ref{eq:bwd}, the backward SDE becomes known and solvable using numerical solvers \cite{liu2021pseudo, lu2022dpm}, $\hat z_0=\mathsf{solve}(z_t, t, s, f, g)$.
% \begin{equation}
% \hat z_0=\mathsf{solve}(z_t, t, s, f, g).
% \end{equation}
%\subsection{Removing watermarks with variance preserving diffusion}

Among many ways to define $f$ and $g$ in Equation \ref{eq:fwd}, variance preserving SDE (VP-SDE) is commonly used \citep{ho2020denoising, rombach2022high}. %It defines the forward process with the following SDE,
% \begin{equation}\small\label{eq:vp_fwd}
% dz=-\frac{1}{2}\beta(t)zdt+\sqrt {\beta(t)} dw,
% \end{equation}
% where $t \in[0,1], z \in \mathbb{R}^d$, $\beta(t)$ is a pre-defined real-valued function depicting the noise schedule.
Under this setting, the conditional distribution of the noised sample is the following Gaussian \cite{song2020score}:
% \begin{equation}\small\label{eq:cond}
% p(z_t|z_0)=\mathcal{N}(\sqrt{\alpha(t)}z_0, 1-\alpha(t)),
% \end{equation}
$
p(z_t|z_0)=\mathcal{N}(\sqrt{\alpha(t)}z_0, 1-\alpha(t))\inlineeqnum\label{eq:cond},
$
where $\alpha(t)$ a pre-defined noise schedule. The variance of the original distribution $p(z_0)$ is preserved at any step.

%To remove invisible watermarks from images, we use VP diffusion models trained to denoise in the latent space~\cite{rombach2022high}. Latent diffusion models do not directly operate on image pixels. Instead, they first map an image $x\in \mathcal{X}$ to a latent representation $z_0=\phi(x)$ and then operate in the latent space $\mathcal{Z}$. After that, an image can be decoded from the latent representation, i.e., $\hat x=\theta(\hat z_0)$.

As defined in Algorithm \ref{algo}, our algorithm removes the watermark from the watermarked image $x_w$ using diffusion models. $x_w$ is first mapped to the latent representation $z_0$, which is then noised to the time step $t^*$. A latent diffusion model is then used to reconstruct the latent $\hat z_0$, which is mapped back to an image $\hat x$. 

Similar to denoising autoencoders, in either diffusion or VAEs, the noise-injection is integral to the algorithms themselves, and no additional noise-injection is needed.

\section{Theoretical Analysis}
\label{sec:theory}

We show in this section that the broad family of regeneration attacks as defined in Equation~\ref{eq:algo} enjoy provable guarantees on their ability to remove invisible watermarks while retaining the high quality of the original image. Our proofs are deferred to Appendix \ref{app:proofs}. More discussion on the implications and interpretation of our theoretical analysis can be found in Appendix \ref{app:remarks}.

\subsection{Certified Watermark Removal}
How do we quantify the ability of an attack algorithm to remove watermarks? We argue that if after the attack, no algorithm is able to distinguish whether the result is coming from a watermarked image or the corresponding original image without the watermark, then we consider the watermark certifiably removed. More formally: %A formal definition is stated below.
%\yw{Adding the new theory using DP below.}

% \begin{definition}[Watermarking Scheme]
%     A \text{watermarking scheme}includes two functions $(\textsf{Watermark}, \textsf{Detect})$, where 
%     $$\textsf{Watermark}: \cX \times \textsf{aux}\rightarrow \cX$$ and $$\textsf{Detect}: \cX \times \textsf{aux} \rightarrow \{0,1\}$$.
%    In the above, $\textsf{aux}$ can be a secret key generated from some distribution or any additional information.
% \end{definition}
\begin{definition}[$f$-Certified-Watermark-Free] 
We say that a watermark removal attack is $f$-Certified-Watermark-Free (or $f$-CWF) against a watermark scheme for a non-increasing function $f: [0,1]\rightarrow [0,1]$, if for any detection algorithm $\detect: \cX \times \aux \rightarrow \{0,1\}$, the Type II error (false negative rate) $\epsilon_2$ of $\detect$ obeys that $\epsilon_2 \geq f(\epsilon_1)$ for all Type I error $0 \leq \epsilon_1 \leq 1$. 
\end{definition}

% \begin{theorem}\label{thm:certified_removal}
%   %For any $\Delta$-invisible watermarking scheme with respect to $\ell_2$-distance in the $\phi$-embedding space, i.e., $\mathsf{dist}(x_0,x_w):=\|\phi(x_0) - \phi(x_w)\|\leq \Delta$. 
%   For any watermarking scheme such that $\|\phi(x_0) - \phi(x_w)\|\leq \Delta$.
%   The randomized algorithm $\cA(\phi(\cdot) + \cN(0,\sigma^2 I_d))$ 
% produces a reconstructed image 
%    $\hat{x}$ which satisfies $f$-CWF with
%    $$f(\epsilon_1) = \Phi\left(\Phi^{-1}(1-\epsilon_1) - \frac{\Delta}{\sigma}\right)$$, where $\Phi$ is the Cumulative Density Function function of the standard normal distribution.
% \end{theorem}
Let us also define a parameter to quantify the effect of the embedding function $\phi$.
\begin{definition}[Local Watermark-Specific Lipschitz property]
    We say that an embedding function $\phi: \cX\rightarrow \R^d$ satisfies $L_{x,w}$-Local Watermark-Specific Lipschitz property if for a watermark scheme $w$ that generates $x_w$ with $x$,
    $$\|\phi(x_w) - \phi(x)\|\leq L_{x,w}\|x_w - x\|.$$ 
\end{definition}
The parameter $L_{x,w}$ measures how much the embedding compresses the watermark added on a particular clean image $x$. If $\phi$ is identity, then $L_{x,w} \equiv 1 $. If $\phi$ is a projection matrix to a linear subspace then $0\leq L_{x,w} \leq 1$ depending on the magnitude of the component of $x_w - x$ in this subspace. For a neural image embedding $\phi$, the exact value of $L_{x,w}$ is unknown but given each $x_w$ and $x$ it can be computed and estimated efficiently. We defer more discussion to the Appendix \ref{sec:additional_discussion}.
\begin{theorem}\label{thm:certified_removal}
  For a $\Delta$-invisible watermarking scheme with respect to $\ell_2$-distance. % i.e., $\mathsf{dist}(x_0,x_w):=\|x_0 - x_w\|\leq \Delta$.
  Assume the embedding function $\phi$ of the diffusion model is $L_{x,w}$-Locally Lipschitz.
  The randomized algorithm $\cA(\phi(\cdot) + \cN(0,\sigma^2 I_d))$ 
  % Algorithm~\ref{alg:XXX} \yw{(The diffusion regeneration attack)} with parameter $t^*$ 
produces a reconstructed image 
   $\hat{x}$ which satisfies $f$-CWF with
   %$$f(\epsilon_1) = \Phi\left(\Phi^{-1}(1-\epsilon_1) - \frac{L \Delta \sqrt{\alpha(t^*)} }{\sqrt{1-\alpha(t^*)}}\right)$$
   $$f(\epsilon_1) = \Phi\left(\Phi^{-1}(1-\epsilon_1) - \frac{L_{x,w}\Delta}{\sigma}\right),$$ where $\Phi$ is the Cumulative Density Function function of the standard normal distribution.
\end{theorem}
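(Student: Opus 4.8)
The plan is to recognize that $f(\epsilon_1)$ is exactly the optimal trade-off function between two isotropic Gaussians with equal covariance, so that the entire statement reduces to the Neyman--Pearson lemma combined with a data-processing inequality for hypothesis testing. The crucial structural observation is that the attack pushes both the watermarked image $x_w$ and its clean counterpart $x$ through \emph{the same} randomized map. Write $P_1$ for the law of $\cA(\phi(x_w) + \cN(0,\sigma^2 I_d))$ and $P_0$ for the law of $\cA(\phi(x) + \cN(0,\sigma^2 I_d))$. Any detector $\detect$ attempting to separate $P_1$ from $P_0$ is solving a binary hypothesis-testing problem, and the smallest achievable Type II error subject to Type I error at most $\epsilon_1$ is, by definition, the trade-off function $T(P_0,P_1)(\epsilon_1)$. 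Hence establishing $f$-CWF is equivalent to proving $T(P_0,P_1)(\epsilon_1)\ge f(\epsilon_1)$ for every $\epsilon_1\in[0,1]$.

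First I would isolate the two intermediate distributions before $\cA$ is applied, namely $Q_1=\cN(\phi(x_w),\sigma^2 I_d)$ and $Q_0=\cN(\phi(x),\sigma^2 I_d)$, which share the covariance $\sigma^2 I_d$ and differ only in their means. By rotational symmetry the optimal test between two such Gaussians depends only on the mean separation $\mu:=\|\phi(x_w)-\phi(x)\|$, and Neyman--Pearson yields the closed form $T(Q_0,Q_1)(\epsilon_1)=\Phi\bigl(\Phi^{-1}(1-\epsilon_1)-\mu/\sigma\bigr)$. The Lipschitz hypothesis together with $\Delta$-invisibility then bounds the separation: $\mu=\|\phi(x_w)-\phi(x)\|\le L_{x,w}\|x_w-x\|\le L_{x,w}\Delta$. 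Because $t\mapsto\Phi\bigl(\Phi^{-1}(1-\epsilon_1)-t/\sigma\bigr)$ is decreasing, replacing $\mu$ by its upper bound only lowers the value, so $T(Q_0,Q_1)(\epsilon_1)\ge f(\epsilon_1)$.

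The final step is to propagate this bound through the regeneration function $\cA$ via the data-processing inequality for trade-off functions: since $P_0$ and $P_1$ arise from $Q_0$ and $Q_1$ by applying the common (possibly randomized) channel $\cA$, which does not depend on the secret $\aux$, post-processing can never make the two distributions easier to distinguish, giving $T(P_0,P_1)\ge T(Q_0,Q_1)$ pointwise. Chaining the two inequalities produces $T(P_0,P_1)(\epsilon_1)\ge f(\epsilon_1)$, which is precisely the claimed $f$-CWF guarantee.

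The main obstacle is not any single computation but getting the abstraction right. One must set up the trade-off-function formalism correctly (equivalently, phrase the claim as a Gaussian differential privacy statement), establish or cite the data-processing monotonicity $T(\mathrm{Proc}(P),\mathrm{Proc}(Q))\ge T(P,Q)$, and verify that $\cA$ genuinely acts as a single post-processing channel applied identically under both hypotheses, with its internal randomness independent of whether the input was watermarked. Once this framing is secured, the isotropic-Gaussian evaluation and the monotonicity argument are routine.
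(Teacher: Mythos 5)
Your proposal is correct and follows essentially the same route as the paper's proof: reduce the problem to distinguishing the two isotropic Gaussians $\cN(\phi(x),\sigma^2 I_d)$ and $\cN(\phi(x_w),\sigma^2 I_d)$, evaluate their trade-off function in closed form via Neyman--Pearson, and propagate the bound through $\cA$ with the post-processing (data-processing) inequality for trade-off functions, exactly as in the paper's appeal to Lemma 2.9 of the Gaussian-DP framework. The only (harmless) divergence is in handling the case $\|\phi(x_w)-\phi(x)\| < L_{x,w}\Delta$: you invoke monotonicity of the explicit formula $t\mapsto\Phi\bigl(\Phi^{-1}(1-\epsilon_1)-t/\sigma\bigr)$ directly, whereas the paper realizes the same domination by a noise-adding-and-rescaling post-processing map; your version is a bit more direct and equally valid.
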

Figure \ref{fig:error} illustrates what the tradeoff function looks like.
The result says that after the regeneration attack, it is impossible for any detection algorithm to correctly detect the watermark with high confidence. In addition, it shows that such detection is as hard as telling the origin of a single sample $Y$ from either of the two Gaussian distributions $\cN(0,1)$ and $\cN(L_{x,w}\Delta/\sigma,1)$. Specifically, when there is a uniform upper bound $L \geq L_{x,w}$, we can calibrate $\sigma$ such that the attack is provably effective for a \emph{prescribed} $\Phi(\Phi^{-1}(1-\cdot) - L\Delta/\sigma)$-CWF guarantee for \emph{all} input images and \emph{all} $\Delta$-invisible watermarks (see specific constructions in Appendix~\ref{sec:additional_discussion}).

The proof, deferred to Appendix \ref{app:proofs}, leverages an interesting connection to a modern treatment of differential privacy \citep{dwork2006calibrating} known as the Gaussian differential privacy \citep{dong2022gaussian}. The work of \cite{dong2022gaussian} itself is a refinement and generalization of the pioneering work of \cite{wasserman2010statistical} and \cite{kairouz2015composition} which established a tradeoff-function view. 

\begin{figure}[t]
\begin{minipage}[T]{0.40\linewidth}
\centering
\vspace{0pt}
\includegraphics[width=0.99\linewidth]{fig/error_new.pdf}
\end{minipage}
\hfill
\begin{minipage}[T]{0.59\linewidth}
\centering
\vspace{0pt}
\captionof{figure}{
Theoretical and empirical trade-off functions for DwtDctSvd watermark detectors after our attack. Trade-off functions indicate how much less Type II error (false negative rate) the detector gets in return by having more Type I error (false positive rate). Theoretically, after the attack, no detection algorithm can fall in the \emph{Impossibility Region} and have both Type I error and Type II error at a low level. Empirically, the watermark detector performs even worse than the theory, indicating the success of our attack and the validity of the theoretical bound. We use 500 watermarked MS-COCO images with an empirically valid upper bound of $L=1$ and noise level $\sigma = 1.16\Delta$. An additional example for the RivaGAN watermark is provided in Figure \ref{fig:nips_error}.}
\label{fig:error}
\end{minipage}
\vspace{-1.0em}
\end{figure}

\subsection{Utility Guarantees}

In this section, we prove that the regenerated image $\hat{x}$ is close to the original (unwatermarked) image $x_0$. This is challenging because the denoising algorithm only gets access to the noisy version of the watermarked image. Interestingly, we can obtain a general extension lemma showing that for any black-box generative model that can successfully denoise a noisy yet unwatermarked image with high probability, the same result also applies to the watermarked counterpart, except that the failure probability is slightly larger. 
\begin{theorem}\label{thm:utility_extension}
    Let $x_0$ be an image with $n$ pixels and $\phi:\R^n \rightarrow \R^d$ be an embedding function. Let $\cA$ be an image generation / denoising algorithm such that with probability at least $1-\delta$, $\|\cA(\phi(x_0) + \cN(0,\sigma^2 I_d)) - x_0\| \leq \xi_{x_0,\sigma,\delta}$. Then for any $\Delta$-invisible watermarking scheme that produces $x_w$ from a clean image $x_0$, then $\hat{x} = \cA(\phi(x_w) + \cN(0,\sigma^2 I_d))$ satisfies that
    $$
    \|\hat{x} - x_0\| \leq \xi_{x_0,\sigma,\delta}
$$
with a probability at least $1 - \tilde{\delta}$, where
$
\text{\footnotesize$ \tilde{\delta} = \min_{v\in \R} \left\{\delta \cdot  e^{v} + \Phi\left(\frac{\tilde{\Delta} }{2\sigma} - \frac{v\sigma}{\tilde{\Delta}} \right) - e^v \Phi\left(-\frac{\tilde{\Delta}}{2\sigma} - \frac{v\sigma}{\tilde{\Delta}}\right)\right\}$ }
$
in which $\Phi$ denotes the standard normal CDF and $\tilde{\Delta}:= L_{x_0,w} \Delta$.
\end{theorem}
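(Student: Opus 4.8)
The plan is to view the theorem as a statement about how much a fixed measurable ``success'' event can lose probability when we shift the mean of an isotropic Gaussian, which is exactly the regime governed by the Gaussian differential privacy / hockey-stick divergence machinery already invoked for Theorem~\ref{thm:certified_removal}. Define the success event $E := \{\|\cA(\cdot) - x_0\| \leq \xi_{x_0,\sigma,\delta}\}$, a fixed measurable subset of outputs. Under the unwatermarked input $U \sim \cN(\phi(x_0),\sigma^2 I_d)$ we are given $\Pr[\cA(U)\in E]\geq 1-\delta$, i.e. the failure probability is at most $\delta$; under the watermarked input $W \sim \cN(\phi(x_w),\sigma^2 I_d)$ we must bound the failure probability $\Pr[\cA(W)\notin E]$. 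The two input laws are isotropic Gaussians with identical covariance that differ only by the mean shift $\phi(x_w)-\phi(x_0)$. Combining $\Delta$-invisibility $\|x_w-x_0\|\leq \Delta$ with the $L_{x_0,w}$-Local Watermark-Specific Lipschitz property gives $\|\phi(x_w)-\phi(x_0)\|\leq L_{x_0,w}\Delta = \tilde{\Delta}$, so the shift has $\ell_2$-norm at most $\tilde{\Delta}$.

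Next I would reduce to one dimension and the worst-case shift. Because an isotropic Gaussian is rotationally invariant, the statistical relationship between $U$ and $W$ depends on the shift only through its norm, and the relevant divergence is monotone in that norm, so it suffices to treat the extreme case $\|\phi(x_w)-\phi(x_0)\|=\tilde{\Delta}$, which collapses to comparing $\cN(0,\sigma^2)$ and $\cN(\tilde{\Delta},\sigma^2)$ on $\R$. This pair is precisely the Gaussian mechanism of sensitivity $\tilde{\Delta}$ and noise $\sigma$, i.e. it is $\mu$-GDP with $\mu := \tilde{\Delta}/\sigma$. Its exact privacy profile states that for every $v \in \R$ and every measurable event $S$, $\Pr_{W}[S] \leq e^{v}\Pr_{U}[S] + \Phi\!\left(\tfrac{\tilde{\Delta}}{2\sigma} - \tfrac{v\sigma}{\tilde{\Delta}}\right) - e^{v}\Phi\!\left(-\tfrac{\tilde{\Delta}}{2\sigma} - \tfrac{v\sigma}{\tilde{\Delta}}\right)$. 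The randomness of $\cA$ is handled by post-processing: applying the (possibly stochastic) map $\cA$ and then taking the indicator of $E^c$ is a data-processing operation, under which the hockey-stick divergence cannot increase, so the same inequality holds with $S = \{\cA(\cdot)\notin E\}$.

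Finally I would instantiate $S = E^c$ and use $\Pr_U[E^c]\leq \delta$ to obtain, for every $v\in\R$, $\Pr_W[E^c] \leq \delta\, e^{v} + \Phi\!\left(\tfrac{\tilde{\Delta}}{2\sigma} - \tfrac{v\sigma}{\tilde{\Delta}}\right) - e^{v}\Phi\!\left(-\tfrac{\tilde{\Delta}}{2\sigma} - \tfrac{v\sigma}{\tilde{\Delta}}\right)$. Since the left-hand side does not depend on $v$, the bound holds for the infimum of the right-hand side over $v$, which is exactly the stated $\tilde{\delta}$; hence $\Pr_W[E]\geq 1-\tilde{\delta}$, and on this event $\|\hat{x}-x_0\|\leq \xi_{x_0,\sigma,\delta}$, which is the claim.

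The main obstacle is the second step: justifying that it suffices to analyze the one-dimensional, maximal-shift Gaussian pair, and that the exact (not merely asymptotic) $(\epsilon,\delta)$ tradeoff curve for two shifted Gaussians yields the closed form above. This requires the tight characterization of the hockey-stick divergence between Gaussians from the GDP literature together with a monotonicity argument in the shift norm, plus a careful data-processing step to pass from the clean Gaussian inputs to the arbitrary randomized reconstruction $\cA$ composed with the failure indicator. Everything else is a mechanical substitution of $\mu = \tilde{\Delta}/\sigma$ and an optimization over the free parameter $v$.
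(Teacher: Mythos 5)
Your proposal is correct and follows essentially the same route as the paper: the paper's proof likewise reduces to the $(v,w)$-indistinguishability (privacy profile) of the Gaussian mechanism with sensitivity $\tilde{\Delta}=L_{x_0,w}\Delta$, citing the dual characterization of \cite{balle2018improving}, instantiates the event $S=\{\|\hat{x}-x_0\|>\xi_{x_0,\sigma,\delta}\}$, and minimizes over $v$. The monotonicity-in-shift-norm and one-dimensional reduction you flag as the main obstacle is exactly the step the paper discharges in the proof of Theorem~\ref{thm:certified_removal} (via the add-noise-and-rescale post-processing trick), which the utility proof inherits by reference.
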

The theorem says that if a generative model is able to denoise a noisy version of the original image, then the corresponding watermark-removal attack using this generative model provably produces an image with similar quality.

\begin{corollary}\label{cor:simple_formula}
The expression for $\tilde{\delta}$ above can be (conservatively) simplified to $
\tilde{\delta} \leq  e^{\frac{L_{x,w}^2\Delta^2}{\sigma^2}}\cdot  \delta^{1/2}.$

For example if $\sigma \asymp L_{x,w}\Delta$, then this is saying that if $\xi_{x_0,\sigma,\delta}$ depends logarithmically on $1/\delta$, the same exponential tail holds for denoising the watermarked image.
\end{corollary}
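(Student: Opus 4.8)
The plan is to simplify the exact expression
$$
\tilde{\delta} = \min_{v\in \R} \left\{\delta \cdot  e^{v} + \Phi\left(\frac{\tilde{\Delta} }{2\sigma} - \frac{v\sigma}{\tilde{\Delta}} \right) - e^v \Phi\left(-\frac{\tilde{\Delta}}{2\sigma} - \frac{v\sigma}{\tilde{\Delta}}\right)\right\}
$$
by plugging in a single convenient value of $v$ (since the minimum is at most the objective evaluated at any particular $v$) and then bounding the three resulting terms crudely. The natural choice that produces the advertised form is $v = \tilde{\Delta}^2/(2\sigma^2) = L_{x,w}^2\Delta^2/(2\sigma^2)$, so that $e^v = e^{L_{x,w}^2\Delta^2/(2\sigma^2)}$, which already matches the prefactor (up to a factor that I will need to absorb) and makes the argument of the last normal CDF equal to $-\tilde\Delta/\sigma$. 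I would write out the three terms with this substitution and then discard or upper-bound each of them.

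First I would handle the $\delta \cdot e^v$ term: with $v = \tilde\Delta^2/(2\sigma^2)$ this is $e^{L_{x,w}^2\Delta^2/(2\sigma^2)}\cdot \delta$. Next I would observe that the last term $-e^v\Phi(\cdots)$ is negative, so dropping it only increases the bound and can be discarded. The remaining term is the lone $\Phi\left(\frac{\tilde\Delta}{2\sigma} - \frac{v\sigma}{\tilde\Delta}\right)$; with the chosen $v$ its argument becomes $\frac{\tilde\Delta}{2\sigma} - \frac{\tilde\Delta}{2\sigma} = 0$, giving $\Phi(0) = 1/2$ — but that is a constant, not something proportional to $\delta^{1/2}$, so the naive substitution does not immediately yield the claimed clean form. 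This is where the main obstacle lies: I must not take $v$ fixed but rather pick $v$ so that the standalone $\Phi$ term and the $\delta e^v$ term balance, producing the geometric-mean $\delta^{1/2}$ dependence.

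The fix is to recall the sub-Gaussian tail bound $\Phi(-a) \leq \tfrac12 e^{-a^2/2}$ for $a\geq 0$ and apply it to the standalone normal CDF term after writing it as $\Phi(-a)$ with $a = \frac{v\sigma}{\tilde\Delta} - \frac{\tilde\Delta}{2\sigma}$ (valid once $v$ is large enough that $a\ge 0$). Then $\Phi\left(\frac{\tilde\Delta}{2\sigma} - \frac{v\sigma}{\tilde\Delta}\right) \leq \tfrac12 \exp\!\left(-\tfrac12\left(\tfrac{v\sigma}{\tilde\Delta} - \tfrac{\tilde\Delta}{2\sigma}\right)^2\right)$, which decays in $v$, while $\delta e^v$ grows in $v$; choosing $v$ to equate the two exponents yields the $\delta^{1/2}$ scaling. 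Concretely I would set $v$ so that $e^v \cdot \delta^{1/2}$ appears, i.e. tune $v$ to make the exponential factor from the tail bound equal to $\delta^{1/2}$ times a harmless constant; after collecting the prefactors and using that $\tfrac12(\cdot)\le 1$ in the relevant regime one reads off $\tilde\delta \le e^{L_{x,w}^2\Delta^2/\sigma^2}\,\delta^{1/2}$.

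I expect the genuinely fiddly part to be verifying that the chosen $v$ indeed keeps $a\ge 0$ and that the two exponential contributions combine into exactly $e^{L_{x,w}^2\Delta^2/\sigma^2}$ rather than $e^{L_{x,w}^2\Delta^2/(2\sigma^2)}$; the factor-of-two discrepancy between the exponent in $\tilde\delta$ and in the natural substitution is precisely what the Gaussian-tail step and the $\delta^{1/2}$ balancing are designed to account for. Since the corollary only claims a conservative upper bound, I can afford to be lossy at each step, so the second half of the statement — the remark that a logarithmic-in-$1/\delta$ dependence of $\xi_{x_0,\sigma,\delta}$ survives when $\sigma \asymp L_{x,w}\Delta$ — then follows immediately by substituting $\delta \mapsto \tilde\delta$ into $\xi$ and noting $\log(1/\tilde\delta) = \tfrac12\log(1/\delta) - L_{x,w}^2\Delta^2/\sigma^2 = O(\log(1/\delta))$ under that scaling.
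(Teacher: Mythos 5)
Your route is genuinely different from the paper's. You attack the exact privacy-profile expression from Theorem~\ref{thm:utility_extension} head-on: plug in a specific $v$, discard the negative term $-e^v\Phi(\cdot)$, bound the surviving $\Phi$ term with the sub-Gaussian tail $\Phi(-a)\le \tfrac12 e^{-a^2/2}$, and choose $v$ to balance the two remaining terms so that a $\delta^{1/2}$ emerges. The paper instead takes a one-line detour through R\'enyi divergences: the Gaussian mechanism with sensitivity $\tilde{\Delta}$ and noise $\sigma$ has order-$u$ R\'enyi divergence $u\tilde{\Delta}^2/(2\sigma^2)$, and the probability-preservation inequality (Proposition 10 of \cite{mironov2017renyi}) gives $\tilde{\delta}\le \min_{u\ge 1}\bigl(e^{u\tilde{\Delta}^2/(2\sigma^2)}\delta\bigr)^{(u-1)/u}$; setting $u=2$ yields $e^{\tilde{\Delta}^2/(2\sigma^2)}\delta^{1/2}$, which is actually \emph{stronger} than the stated bound. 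Your approach buys self-containedness (no R\'enyi machinery, only the formula already displayed in the theorem), at the cost of a fiddlier optimization; the paper's buys a one-line proof and a better constant in the exponent. Your treatment of the closing remark (solving for $\delta$ in terms of $\tilde{\delta}$ and noting the logarithm is preserved up to the additive $L_{x,w}^2\Delta^2/\sigma^2$ term) is correct.

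One concrete gap to flag in your sketch: the balancing does not quite close to the exact displayed constant in the worst case. Write $c=\tilde{\Delta}/\sigma$ and $t=\sqrt{\log(1/\delta)}$. Any choice of $v$ making $\delta e^{v}= e^{c^2}\delta^{1/2}$ (using the full budget on the first term) forces $a=\tfrac{v\sigma}{\tilde{\Delta}}-\tfrac{\tilde{\Delta}}{2\sigma}=\tfrac{c}{2}+\tfrac{t^2}{2c}\ge t$ by AM--GM, with \emph{equality} when $c=t$; at that point the residual term is exactly $\tfrac12 e^{-t^2/2}=\tfrac12\delta^{1/2}$ and your total is $\bigl(e^{c^2}+\tfrac12\bigr)\delta^{1/2}$, an additive $\tfrac12\delta^{1/2}$ above the claim. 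Splitting the budget differently runs into the same obstruction at $\log(1/\delta)=\tilde{\Delta}^2/\sigma^2$ (and, for small $c$, the gap $e^{c^2}-e^{c^2/2}$ you would lean on vanishes). Since the corollary only asserts a conservative simplification this is immaterial in substance, but as written your argument proves $\tilde{\delta}\le(e^{L_{x,w}^2\Delta^2/\sigma^2}+\tfrac12)\delta^{1/2}$ rather than the displayed inequality; to get the exact constant you would need either a sharper tail estimate (e.g.\ the Mills ratio for $\delta\le e^{-1}$), to retain part of the negative term, or simply to switch to the R\'enyi route.
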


The above result is powerful in that it makes no assumption about what perturbation the watermarking schemes could inject and which image generation algorithm we use. We give a few examples below.

%\begin{remark}[Provably bounds on denoising] 
For denoising algorithms with theoretical guarantees, e.g., TV-denoising \citep[Theorem 2]{hutter2016optimal}, our results imply provable guarantees on the utility for the watermark removal attack of the form, ``w.h.p.,  $\frac{1}{n}\|\hat{x} - x_0\|^2 = \tilde{O}\left(  \frac{\sigma  \mathrm{TV}_{2d}(x_0)}{n}\right)$'', i.e., vanishing mean square error (MSE) as $n$ gets bigger.

For modern deep learning-based image denoising and generation algorithms where worst-case guarantees are usually intractable, Theorem~\ref{thm:utility_extension} is still applicable for each image separately. That is to say, as long as their empirical denoising quality is good on an unwatermarked image, the quality should also be good on its watermarked counterpart.

\section{Evaluation}
\label{sec:exp}

\paragraph{Datasets.} We evaluate our attack on two types of images: real photos and AI-generated images. For real photos, we use 500 randomly selected images from the MS-COCO dataset \cite{lin2014microsoft}. For AI-generated images, we employ the \texttt{Stable Diffusion-v2.1} model\footnote{\url{https://huggingface.co/stabilityai/stable-diffusion-2-1-base}} from Stable Diffusion \cite{rombach2022high}, a state-of-the-art generative model capable of producing high-fidelity images. Using prompts from the Stable Diffusion Prompt (SDP) dataset\footnote{\url{https://huggingface.co/datasets/Gustavosta/Stable-Diffusion-Prompts}}, we generate 500 images encompassing both photorealistic and artistic styles. This diverse selection allows for a comprehensive evaluation of our attack on invisible watermarks. All experiments are conducted on Nvidia A6000 GPUs.

\paragraph{Watermark settings.} We evaluate four publicly available pixel-level watermarking methods: DwtDctSvd \cite{navas2008dwt}, RivaGAN \cite{Zhang2019RobustIV}, StegaStamp \cite{tancik2020stegastamp}, and SSL watermark \cite{Fernandez2021WatermarkingII}. These methods represent a variety of approaches, ranging from traditional signal processing to recent deep learning techniques, as introduced in Section \ref{sec:background}. To account for watermarks of different lengths, we use $k=32$ bits for DwtDctSvd, RivaGAN, and SSL watermark, and $k=96$ bits for StegaStamp. For watermark detection, we set the decision threshold to reject the null hypothesis with $p < 0.01$, requiring the detection of 23 out of 32 bits or 59 out of 96 bits, respectively, for the corresponding methods, as described in Section \ref{sec:wm_detect}. For watermark extraction, we use the publicly available code for each method with default inference and fine-tuning parameters specified in their papers. 

\begin{figure}[t]
\centering
\includegraphics[width=1.0\linewidth]{fig/att_imgs.pdf}
% \vspace{-15pt}
\caption{Examples of attacks on DwtDctSvd watermarking, including destructive attacks (e.g., brightness change and JPEG compression), constructive attacks (e.g., Gaussian blur), and regeneration attacks using VAEs and diffusion models. Brightness change, JPEG compression, VAE attack, and diffusion attack successfully remove the watermark. The VAE attack over-smooths the image, resulting in blurriness. The diffusion attack maintains high image quality while removing the watermark. Additional attack examples for other watermarking schemes are in Figures  \ref{fig:example_appendix}, \ref{fig:nips_rivagan}, \ref{fig:nips_ssl}, \ref{fig:nips_stega}.
}
\label{fig:example2}
% \vspace{-12pt}
% \vspace{-1em}
\end{figure}

\paragraph{Attack baselines.} To thoroughly evaluate the robustness of our proposed watermarking method, we test it against a diverse set of baseline attacks representing common image perturbations. We select both geometric/quality distortions and noise manipulations that could potentially interfere with embedded watermarks. Specifically, the baseline attack set includes: brightness adjustments with enhancement factors of [2, 4, 6, 8, 12], contrast adjustments with enhancement factors of [2, 3, 4, 5, 6, 7], JPEG compression at quality levels [10, 20, 30, 40, 50, 60], Gaussian noise addition with a mean of 0 and standard deviations of [5, 10, 15, 20, 25, 30], and Gaussian blur with radii of [2, 4, 6, 8, 10, 12]. Further details are provided in Appendix~\ref{sec:additional_method}.

\paragraph{Proposed attacks.} For regeneration attacks using variational autoencoders, we evaluate two pre-trained image compression models from the CompressAI library \cite{begaint2020compressai}: Bmshj2018 \cite{balle2018variational} and Cheng2020~\cite{cheng2020learned}. Compression factors are set to [1, 2, 3, 4, 5, 6], where lower factors correspond to more heavily degraded images. For diffusion model attacks, we use the \texttt{Stable Diffusion-v2.1} model. The number of noise steps is set to [10, 30, 50, 100, 150, 200] (with $\sigma = $[0.10, 0.17, 0.23, 0.34, 0.46, 0.57]), and we employ pseudo numerical methods for diffusion models (PNDMs) \cite{liu2021pseudo} to generate samples. By adjusting the compression factors and noise steps, we achieve varying levels of perturbation for regeneration attacks.

\paragraph{Evaluation metrics.} We evaluate the quality of attacked and watermarked images compared to the original cover image using two common metrics: Peak Signal-to-Noise Ratio (PSNR) defined as $\operatorname{PSNR}\left(x, x^{\prime}\right)= -10 \cdot \log_{10}\left(\text{MSE}\left(x, x^{\prime}\right)\right)$, for images $x, x^{\prime} \in [0,1]^{c \times h \times w}$, and Structural Similarity Index (SSIM) \cite{psnr} which measures perceptual similarity. To evaluate the diversity and quality of watermarked images, we use Fréchet Inception Distance (FID)~\cite{heusel2017gans} between the distributions of watermarked and unwatermarked images. To evaluate the robustness of the watermark, we compute the True Positive Rate (TPR) at a fixed False Positive Rate (FPR), specifically TPR@FPR=0.01. The detection threshold corresponding to FPR=0.01 is set according to each watermark's default configuration: correctly decoding 23/32, 59/96, and 32/48 bits for the respective watermarking methods, as described in the watermark settings.

\begin{figure}
    \centering
    \includegraphics[width=1.0\linewidth]{fig/nips_rebuttal_high_res.pdf}
    \caption{Quality-detectability tradeoff for four watermarking schemes under eight attack methods on the MS-COCO dataset. Regeneration attacks (Diffusion model, VAE-Cheng2020, and VAE-Bmshj2018) are highlighted for their performance. The x-axis shows image quality metrics (SSIM and PSNR, higher values indicate better quality), while the y-axis represents the detection metric True Positive Rate at 1\% False Positive Rate (TPR@FPR=0.01, lower values are better for attackers). The strongest attacker should appear in the lower right corner of these plots. Regeneration attacks demonstrate superior performance compared to other attack methods, achieving both lower TPR and higher image quality. Quality-detectability tradeoff results for the SDP dataset are in Figure \ref{fig:all_res_2}.}
    \label{fig:all_res}
\end{figure}

\subsection{Results and Analysis}
This section presents detailed results and analysis of the regeneration attack experiments on different watermarking methods. 
Some attacking examples are shown in Figure \ref{fig:example2}.

\paragraph{Watermarking performance without attacks.}
Table \ref{tab:wm_res} summarizes the watermarked image quality and detection rates for images watermarked by the four methods without any attacks. Each method—DwtDctSvd, RivaGAN, SSL, and StegaStamp—successfully embeds and retrieves messages from the images. These methods are post-processing techniques, adding watermarks to existing images. Among them, SSL achieves the highest PSNR and SSIM values, indicating superior perceptual quality and minimal visual distortion compared to the original images. DwtDctSvd achieves the lowest FID scores, suggesting the watermarked images maintain fidelity similar to clean images. In contrast, StegaStamp exhibits a noticeable drop in quality, with the lowest PSNR and highest FID scores. 
As illustrated in Figure \ref{fig:example1}, StegaStamp introduced noticeable blurring artifacts.

\begin{figure}[t]
\centering
\begin{minipage}[T]{0.53\textwidth}
\vspace*{0pt}
\setlength{\tabcolsep}{1.5pt} 
\centering
\captionof{table}{Performance of different watermarking methods. All methods successfully detect the embedded watermark.}
\label{tab:wm_res}
\resizebox{0.99\linewidth}{!}{
\begin{tabular}{lcccc|cccc}
\toprule
& \multicolumn{4}{c}{\textbf{MS-COCO Dataset}} & \multicolumn{4}{c}{\textbf{SDP Generated Dataset}} \\
Watermark & PSNR$\uparrow$ & SSIM$\uparrow$ & FID$\downarrow$ & TPR@FPR=0.01$\uparrow$ & PSNR$\uparrow$ & SSIM$\uparrow$ & FID$\downarrow$ & TPR@FPR=0.01$\uparrow$ \\
\midrule
DwtDctSvd & 39.38 & 0.983 & 5.28 & 1.000 & 37.73 & 0.972 & 9.62 & 1.000  \\
RivaGAN & 40.55 & 0.978 & 10.83 & 1.000 & 40.64 & 0.979 & 13.56 & 1.000  \\
SSL & 41.79 & 0.984 & 18.86 & 1.000 & 41.88 & 0.983 & 23.87 & 1.000 \\
StegaStamp & 28.50 & 0.911 & 35.91 &  1.000 & 28.28 & 0.900 & 41.63 & 1.000 \\
\bottomrule
\end{tabular}
}
\end{minipage}
\hfill
\begin{minipage}[T]{0.46\textwidth}
\centering
\includegraphics[width=0.99\linewidth]{fig/l2_dist_new.pdf}
\vspace{-2em}
\captionof{figure}{The $\ell_2$ distances between original images and watermarked ones. StegaStamp watermarked images are much more different in both pixel space and latent space.}
\label{fig:l2_dis_new}
\end{minipage}
\end{figure}

\paragraph{Watermark removal effectiveness and image quality reservation.} Figure \ref{fig:all_res} and \ref{fig:all_res_2} present the quality-detectability tradeoff results from applying various regeneration attacks to remove watermarks. The VAE and diffusion model-based attacks (VAE-Bmshj2018, VAE-Cheng2020, Diffusion) consistently achieve over 99\% removal rates for the DctDwtSvd, RivaGAN, and SSL watermarking methods, demonstrating their high effectiveness. In contrast, StegaStamp exhibits the highest robustness, with effective removal only achieved by the diffusion model with substantial noise. This resilience is partially attributed to StegaStamp's lower visual quality and higher perturbation levels~(as indicated in Table \ref{tab:wm_res}, Figure \ref{fig:example1} and Figure \ref{fig:l2_dis_new}). Yet, increasing the noise level in diffusion models reduces StegaStamp's resistance. As demonstrated in Figure \ref{fig:all_res} and \ref{fig:all_res_2}, higher noise levels improve the removal rate on StegaStamp, aligning with the theory that larger perturbations necessitate more noise for effective removal. This also aligns with the trade-off of lower image quality at higher noise levels. Overall, the consistently high removal rates across various watermarking schemes demonstrate the effectiveness of regeneration attacks for watermark removal. In terms of image quality preservation, the regeneration attacks generally maintain high quality, as indicated by PSNR and SSIM metrics. VAE models yield higher PSNR and SSIM scores, suggesting superior perceptual quality from a GAN-based perspective. However, qualitative inspection of example images in Figure~\ref{fig:example2} reveals the VAE outputs exhibit some blurring compared to the diffusion outputs. Since PSNR and SSIM are known to be insensitive to blurring artifacts \cite{ndajah2010ssim,psnr}, we conclude that the choice of using regeneration with diffusion models should be guided by the specific requirements of each application. 
% Overall, all attacks produce minimal perceptual differences from the original images.

\paragraph{Potential defense.} Although we show that the proposed attack is guaranteed to remove any pixel-based invisible watermarks, it is not impossible to detect AI-generated images. Semantic watermarks offer a viable detection method, with further details and results discussed in Appendix~\ref{sec:def}.

\section{Conclusion}
\label{sec:conclu}
We proposed a regeneration attack on invisible watermarks that combines destructive and constructive attacks. Our theoretical analysis proved that the proposed regeneration attack is able to remove certain invisible watermarks from images and make the watermark undetectable by any detection algorithm. We showed with extensive experiments that the proposed attack performed well empirically. The proofs and experiments revealed the vulnerability of invisible watermarks. Given this vulnerability, we explored an alternative defense that uses visible but semantically similar watermarks. 
% Our experiments with one instance of semantic watermark showed promising results. 
Our findings on the vulnerability of invisible watermarks underscore the need for shifting the research/industry emphasis from invisible watermarks to their alternatives.

\bibliography{ref_paper.bib}
\bibliographystyle{plain}

%%%%%%%%%%%%%%%%%%%%%%%%%%%%%%%%%%%%%%%%%%%%%%%%%%%%%%%%%%%%

\newpage
\appendix
\onecolumn
\label{sec:app}
% \begin{figure}[t]
% \centering
% \includegraphics[width=0.99\linewidth]{fig/wm_imgs.pdf}
% \caption{Examples of different invisible watermarking methods. 
% % The top row shows real photos from the MS-COCO~\cite{lin2014microsoft} dataset. The second row shows AI-generated images from DiffusionDB~\cite{wangDiffusionDBLargescalePrompt2022}. Of the four watermarks shown, StegaStamp introduces the most distortion and lowest quality. More examples with larger image sizes are provided in Appendix~\ref{sec:additional_method}.
% }
% \label{fig:example1}
% \end{figure}

\section{Additional Discussion in FAQ style}\label{sec:additional_discussion}

\textbf{How do we control a certain degree of Certified Watermark Freeness apriori?}
 We note that our guarantee in Theorem~\ref{thm:certified_removal} depends on the specific watermark injected on specific image instance through the unknown local Lipschitz parameter $L_{x,w}$. Specifying a fixed CWF level requires us to have a uniform upper bound of $L_{x,w}$ independent to $x$ and $w$.  To give two concrete examples, when the embedding $\phi$ is chosen trivial to be the identity map, then we can take $L_{x,w}\leq 1$.  When $\phi$ is a lower pass filtering, e.g., Fourier transform than removing all high-frequency components except the top $k$ dimension, then we can bound $L_{x,w}\leq \sqrt{k}/n$ where $n$ is the number of pixels. More generally, any linear transformation with bounded operator norm works.

\textbf{Is there a uniform bound for neural embedding $\phi$ such as those from VAE and diffusion models?} We believe there is, but these bounds might be too conservative to use in practice. In particular, they might be $\gg 1$ due to the widely observed adversarial examples \cite{szegedy2013intriguing, goodfellow2014explaining}.  If the injected watermarks are carefully designed such that the injected perturbation is aligned with an adversarial perturbation, then the resulting watermarked images will be more resilient to our attacks that leverage the neural embeddings.  They will not be more resilient to our attacks that do not use neural embeddings though.  In practice, we found that in all existing watermarks,  neural embeddings result in substantially smaller distortions to the original images while suppressing $L_{x,w}$ substantially (Figure \ref{fig:l2_dis_new}).

 \textbf{Other ways of achieving CWF without adding Gaussian noise.} Research from the golden age of digital watermarking \cite{miller2000informed,cox2002first, furon2008broken, cox2007digital} has proposed methods such as quantization for removing watermarks that provide similar tradeoffs in practice to the watermarks that we experimented with in this paper. However, these methods, in general, do not come with provable CWF guarantees. It is critical that the watermark removal attack is randomized to enjoy similar properties to what we have in Theorem~\ref{thm:certified_removal}. 
That said, the classical removal attacks can be used in constructing better embedding function $\phi$ which may help reducing the local Lipschitz parameter and, thus, improving the tradeoff between certified removal and utility of the regenerated image. Besides Gaussian noise, we can also add Laplace noise and other well-known perturbation mechanisms from the differential privacy and cryptography research community. Thoroughly investigating the impact of the choice of noise (and randomized quantization approaches) is an interesting direction of future research. 

\begin{figure}[htbp]
\centering
\includegraphics[width=0.99\linewidth]{fig/wm_appendix.pdf}
\caption{More examples of different invisible watermarking methods.}
\label{fig:example1}
\end{figure}

\begin{figure}[t]
    \centering
    \includegraphics[width=1.0\linewidth]{fig/nips_rebuttal_high_res_2.pdf}
    \caption{Trade-off between quality and detectability for four watermarking schemes tested against eight attack methods on the Stable Diffusion Prompt (SDP) dataset.}
    \label{fig:all_res_2}
\end{figure}

\begin{figure}[htbp]
\centering
\includegraphics[width=0.99\linewidth]{fig/att_appendix.pdf}
\caption{More examples of watermarking attacks against DwtDctSvd.}
\label{fig:example_appendix}
\end{figure}

\begin{figure}[htbp]
    \begin{minipage}[b]{0.48\linewidth}
        \centering
        \includegraphics[width=\linewidth]{fig/nips_rivagan.pdf}
        \caption{Examples of different attacks against RivaGAN watermark.}
        \label{fig:nips_rivagan}
    \end{minipage}
    \hfill
    \begin{minipage}[b]{0.48\linewidth}
        \centering
        \includegraphics[width=\linewidth]{fig/nips_ssl.pdf}
        \caption{Examples of different attacks SSL watermark.}
        \label{fig:nips_ssl}
    \end{minipage}

    \begin{minipage}[b]{0.48\linewidth}
        \centering
        \includegraphics[width=\linewidth]{fig/nips_stega.pdf}
        \caption{Examples of different attacks against StegaStamp watermark.}
        \label{fig:nips_stega}
    \end{minipage}
    \hfill
    \begin{minipage}[b]{0.48\linewidth}
        \centering
        \includegraphics[width=\linewidth]{fig/nips_error.pdf}
        \caption{Theoretical and empirical trade-off functions for RivaGAN watermark.}
        \label{fig:nips_error}
    \end{minipage}
\end{figure}

\section{Additional Method Details} \label{sec:additional_method}
\subsection{Invisible Watermarking Methods}
In this section, we review several well-established invisible watermarking schemes that are evaluated in our experiments (Section \ref{sec:exp}). These approaches cover a range of methods, including traditional signal processing techniques and more recent deep learning methods. They include the default watermarking schemes employed by the widely used Stable Diffusion models \cite{rombach2022high}. We show some invisible watermarking examples in Figure \ref{fig:example1}.

\textbf{DwtDctSvd watermarking.} The DwtDctSvd watermarking method~\cite{navas2008dwt} combines Discrete Wavelet Transform (DWT), Discrete Cosine Transform (DCT), and Singular Value Decomposition (SVD) to embed watermarks in color images. First, the RGB color space of the cover image is converted to YUV. DWT is then applied to the Y channel, and DCT divides it into blocks. SVD is performed on each block. Finally, the watermark is embedded into the blocks. DwtDctSvd is the default watermark used by Stable Diffusion.

\textbf{RivaGAN watermarking.} RivaGAN \cite{Zhang2019RobustIV} presents a robust image watermarking method using GANs. It employs two adversarial networks to assess watermarked image quality and remove watermarks. An encoder embeds the watermark, while a decoder extracts it. By combining these, RivaGAN offers superior performance and robustness. RivaGAN is another watermark used by Stable Diffusion.

\textbf{StegaStamp watermarking.}
StegaStamp \cite{tancik2020stegastamp} is a robust CNN-based watermarking method. It uses differentiable image perturbations during training to improve noise resistance. Additionally, it incorporates a spatial transformer network to resist minor perspective and geometric changes like cropping. This adversarial training and spatial transformer enable StegaStamp to withstand various attacks.

\textbf{SSL watermarking.}
SSL watermarking \cite{Fernandez2021WatermarkingII} utilizes pre-trained neural networks' latent spaces to encode watermarks. Networks pretrained with self-supervised learning (SSL) extract effective features for watermarking. 
The method embeds watermarks through backpropagation and data augmentation, then detects and decodes them from the watermarked image or its features.

% \textbf{Stable Signature watermarking.}
% Stable Signature \cite{fernandez2023stable} fine-tunes the decoder of Latent Diffusion Models (LDM), such that the generated images have hidden invisible watermarks. The method first trains two CNN models that encode and extract watermarks in the same way as HiDDeN \cite{zhu2018hidden}, and only keeps the watermark extractor. Then, it fine-tunes the LDM decoder to conceal a fixed watermark signature by minimizing the distance between the signature extracted by the watermark extractor and the target signature, while trying to keep the generated images close to the original one. Unlike other methods, Stable Signature merges watermarking into the generation process itself rather than using post-processing.

\subsection{Existing Attacking Methods}\label{sec:existing_attack}
In this section, we review common attacking methods that aim to degrade or remove invisible watermarks in images. These methods are widely used to measure the robustness of watermarking algorithms against removal or tampering \cite{Zhang2019RobustIV, tancik2020stegastamp, Fernandez2021WatermarkingII, fernandez2023stable, wen2023tree}. The attacking methods can be categorized into destructive attacks, where the watermark is considered part of the image and actively removed by corrupting the image, and constructive attacks, where image processing techniques like denoising are used to obscure the watermark.

Destructive attacks intentionally corrupt the image to degrade or remove the embedded watermark. Common destructive attack techniques include:

\textbf{Brightness/Contrast adjustment.} This attack adjusts the brightness and contrast parameters of the image globally. Adjusting the brightness/contrast makes the watermark harder to detect.

\textbf{JPEG compression.} JPEG is a common lossy image compression technique. It has a quality factor parameter that controls the amount of compression. A lower quality factor leads to more loss of fine details and a higher chance of degrading the watermark.

% \item[Image rotation.] Rotating the watermarked image by a specific angle can cause the watermark detector to fail to find the watermark. Rotation causes the synchronization between the watermark embedder and detector to be lost.

\textbf{Gaussian noise.} This attack adds random Gaussian noise to each pixel of the watermarked image. The variance of the Gaussian noise distribution controls the strength of the noise. A higher variance leads to more degradation of the watermark signal.

Constructive attacks aim to remove the watermark by improving image quality and restoring the original unwatermarked image. Examples include:

\textbf{Gaussian blur.} Blurring the image by convolving it with a Gaussian kernel smoothens the watermark signal and makes it less detectable. The kernel size and standard deviation parameter control the level of blurring.

% \textbf{BM3D denoising.} BM3D is a highly effective image denoising technique based on Block-matching and 3D filtering~\citep{dabov2007image}. It exploits self-similarities within the image to remove noise and disturbances. BM3D can attenuate watermark signals during the denoising process.  

Other recent attacking methods \cite{Li2023DiffWADM,Saberi2023RobustnessOA,Lukas2023LeveragingOF,an2024benchmarking} aim to remove invisible image watermarks using generative models. Concurrent to our work, \cite{Li2023DiffWADM} use propose DiffWA, a conditional diffusion model with distance guidance for watermark removal. However, DiffWA only works for low-resolution images and lacks theoretical guarantees. \cite{Saberi2023RobustnessOA} train a substitute classifier and conduct projected gradient descent (PGD) attacks on it to deceive black-box watermark detectors. However, their approach requires multiple queries to the target generator. \cite{Lukas2023LeveragingOF} propose an adaptive attack that locally replicates secret watermarking keys by creating differentiable surrogate keys used to optimize the attack parameters. However, they assume the attacker knows the watermarking method, which is a stronger assumption than ours. Our method removes invisible watermarks from high-resolution images, provides theoretical justifications, and does not assume knowledge of the watermarking algorithm.

\section{Remarks on Our Theoretical Guarantee}\label{app:remarks}
In this section, we discuss more about the theoretical guarantees in Section \ref{sec:theory}.
For the readers' convenience, we first restate the definitions and theorems and then discuss their implications and interpretation. The proofs can be found in Appendix \ref{app:proofs}.

\subsection{Certified Watermark Removal}
How do we quantify the ability of an attack algorithm to remove watermarks? We argue that if after the attack, no algorithm is able to distinguish whether the result is coming from a watermarked image or the corresponding original image without the watermark, then we consider the watermark certifiably removed. More formally: %A formal definition is stated below.
%\yw{Adding the new theory using DP below.}

% \begin{definition}[Watermarking Scheme]
%     A \text{watermarking scheme}includes two functions $(\textsf{Watermark}, \textsf{Detect})$, where 
%     $$\textsf{Watermark}: \cX \times \textsf{aux}\rightarrow \cX$$ and $$\textsf{Detect}: \cX \times \textsf{aux} \rightarrow \{0,1\}$$.
%    In the above, $\textsf{aux}$ can be a secret key generated from some distribution or any additional information.
% \end{definition}
\begin{definition}[$f$-Certified-Watermark-Free] 
We say that a watermark removal attack is $f$-Certified-Watermark-Free (or $f$-CWF) against a watermark scheme for a non-increasing function $f: [0,1]\rightarrow [0,1]$, if for any detection algorithm $\detect: \cX \times \aux \rightarrow \{0,1\}$, the Type II error (false negative rate) $\epsilon_2$ of $\detect$ obeys that $\epsilon_2 \geq f(\epsilon_1)$ for all Type I error $0 \leq \epsilon_1 \leq 1$. 
\end{definition}

% \begin{theorem}\label{thm:certified_removal}
%   %For any $\Delta$-invisible watermarking scheme with respect to $\ell_2$-distance in the $\phi$-embedding space, i.e., $\mathsf{dist}(x_0,x_w):=\|\phi(x_0) - \phi(x_w)\|\leq \Delta$. 
%   For any watermarking scheme such that $\|\phi(x_0) - \phi(x_w)\|\leq \Delta$.
%   The randomized algorithm $\cA(\phi(\cdot) + \cN(0,\sigma^2 I_d))$ 
% produces a reconstructed image 
%    $\hat{x}$ which satisfies $f$-CWF with
%    $$f(\epsilon_1) = \Phi\left(\Phi^{-1}(1-\epsilon_1) - \frac{\Delta}{\sigma}\right)$$, where $\Phi$ is the Cumulative Density Function function of the standard normal distribution.
% \end{theorem}
Let us also define a parameter to quantify the effect of the embedding function $\phi$.
\begin{definition}[Local Watermark-Specific Lipschitz property]
    We say that an embedding function $\phi: \cX\rightarrow \R^d$ satisfies $L_{x,w}$-Local Watermark-Specific Lipschitz property if for a watermark scheme $w$ that generates $x_w$ with $x$,
    $$\|\phi(x_w) - \phi(x)\|\leq L_{x,w}\|x_w - x\|.$$ 
\end{definition}
The parameter $L_{x,w}$ measures how much the embedding compresses the watermark added on a particular clean image $x$. If $\phi$ is identity, then $L_{x,w} \equiv 1 $. If $\phi$ is a projection matrix to a linear subspace then $0\leq L_{x,w} \leq 1$ depending on the magnitude of the component of $x_w - x$ in this subspace. For a neural image embedding $\phi$, the exact value of $L_{x,w}$ is unknown but given each $x_w$ and $x$ it can be computed efficiently.
\begin{theorem}
  For a $\Delta$-invisible watermarking scheme with respect to $\ell_2$-distance. % i.e., $\mathsf{dist}(x_0,x_w):=\|x_0 - x_w\|\leq \Delta$.
  Assume the embedding function $\phi$ of the diffusion model is $L_{x,w}$-Locally Lipschitz.
  The randomized algorithm $\cA(\phi(\cdot) + \cN(0,\sigma^2 I_d))$ 
  % Algorithm~\ref{alg:XXX} \yw{(The diffusion regeneration attack)} with parameter $t^*$ 
produces a reconstructed image 
   $\hat{x}$ which satisfies $f$-CWF with
   %$$f(\epsilon_1) = \Phi\left(\Phi^{-1}(1-\epsilon_1) - \frac{L \Delta \sqrt{\alpha(t^*)} }{\sqrt{1-\alpha(t^*)}}\right)$$
   $$f(\epsilon_1) = \Phi\left(\Phi^{-1}(1-\epsilon_1) - \frac{L_{x,w}\Delta}{\sigma}\right),$$ where $\Phi$ is the Cumulative Density Function function of the standard normal distribution.
\end{theorem}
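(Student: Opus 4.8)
The plan is to recognize the detection problem faced by the adversary as a binary hypothesis test between two distributions, and then to exploit the data-processing (post-processing) property of hypothesis-testing tradeoff functions, exactly in the spirit of the Gaussian differential privacy framework of \cite{dong2022gaussian}. Fix a clean image $x$ and its watermarked version $x_w$. After the attack, a detector must distinguish the null law $P_0$, i.e. the law of $\hat x = \cA(\phi(x) + \cN(0,\sigma^2 I_d))$, from the watermarked law $P_1$, i.e. the law of $\hat x = \cA(\phi(x_w) + \cN(0,\sigma^2 I_d))$. By the definition of $f$-CWF it suffices to show that the optimal tradeoff function $T(P_0,P_1)$ — mapping each Type I error $\epsilon_1$ to the smallest achievable Type II error over all detectors — is pointwise lower bounded by $f$, since any detector satisfies $\epsilon_2 \ge T(P_0,P_1)(\epsilon_1)$.

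First I would analyze the two Gaussian inputs fed into $\cA$. Writing $Q_0 = \cN(\phi(x),\sigma^2 I_d)$ and $Q_1 = \cN(\phi(x_w),\sigma^2 I_d)$, these share a covariance and differ only in mean, with $\|\phi(x_w)-\phi(x)\| \le L_{x,w}\|x_w-x\| \le L_{x,w}\Delta$ by the Local Watermark-Specific Lipschitz property combined with $\Delta$-invisibility. By rotational invariance, the likelihood ratio depends on the data only through its projection onto the direction of $\phi(x_w)-\phi(x)$, so the Neyman–Pearson lemma reduces the problem to distinguishing two univariate Gaussians; the resulting optimal tradeoff is $T(Q_0,Q_1)(\epsilon_1) = \Phi\!\big(\Phi^{-1}(1-\epsilon_1)-\mu\big)$ with $\mu = \|\phi(x_w)-\phi(x)\|/\sigma$. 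Since this function is decreasing in $\mu$ and $\mu \le L_{x,w}\Delta/\sigma$, we obtain $T(Q_0,Q_1) \ge f$ pointwise.

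Next I would invoke post-processing. The attack applies the same (possibly randomized) map $\cA$ to both hypotheses, with internal randomness independent of the input; hence $P_0$ and $P_1$ are the pushforwards of $Q_0$ and $Q_1$ under the single Markov kernel $\cA$. The data-processing inequality for tradeoff functions then gives $T(P_0,P_1) \ge T(Q_0,Q_1)$ pointwise, since post-processing can only make two distributions harder to tell apart. Chaining the inequalities yields $\epsilon_2 \ge T(P_0,P_1)(\epsilon_1) \ge T(Q_0,Q_1)(\epsilon_1) \ge f(\epsilon_1)$ for every detector, which is exactly the claimed $f$-CWF guarantee.

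The main obstacle I anticipate is making the post-processing step fully rigorous: one must confirm that $\cA$ genuinely acts as one fixed channel applied identically under both hypotheses (so its randomness does not depend on which input it received), and either cite or reprove the data-processing inequality for tradeoff functions in the precise form needed here. The Gaussian calculation is routine once the reduction to a one-dimensional location test is in place; the conceptual work lies entirely in framing the full attack pipeline as a post-processed Gaussian location problem and transferring the indistinguishability of the noisy embeddings through the reconstruction map $\cA$.
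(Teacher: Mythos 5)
Your proposal is correct and follows essentially the same route as the paper's proof: bound $\|\phi(x_w)-\phi(x)\|$ by $L_{x,w}\Delta$ via the local Lipschitz property, reduce the two isotropic Gaussians to a one-dimensional location test whose optimal tradeoff is $\Phi(\Phi^{-1}(1-\epsilon_1)-\mu)$, and then push the guarantee through $\cA$ by the data-processing inequality for tradeoff functions as in \cite{dong2022gaussian}. The only cosmetic difference is that you pass from $\mu=\|\phi(x_w)-\phi(x)\|/\sigma$ to the worst case $L_{x,w}\Delta/\sigma$ by direct monotonicity of the Gaussian tradeoff function in $\mu$, whereas the paper realizes the same step via an explicit randomized post-processing (add $\cN(0,v)$ and rescale); both are valid.
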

Figure~\ref{fig:error} illustrates what the tradeoff function looks like.
The result says that after the regeneration attack, it is impossible for any detection algorithm to correctly detect the watermark with high confidence. In addition, it shows that such detection is as hard as telling the origin of a single sample $Y$ from either of the two Gaussian distributions $\cN(0,1)$ and $\cN(L_{x,w}\Delta/\sigma,1)$.

The proof in Section \ref{app:proofs} leverages an interesting connection to a modern treatment of differential privacy \citep{dwork2006calibrating} known as the Gaussian differential privacy \citep{dong2022gaussian}. The work of \cite{dong2022gaussian} itself is a refinement and generalization of the pioneering work of \cite{wasserman2010statistical} and \cite{kairouz2015composition} which established a tradeoff-function view. 

\textbf{Discussion.} Let us instantiate the algorithm with a latent diffusion model by choosing $\sigma = \sqrt{(1-\alpha(t^*))/\alpha(t^*)}$ (see Algorithm \ref{algo}) and discuss the parameter choices.

\begin{remark}[Two trivial cases]
Observe that when $\alpha(t^*) = 0$, the result of the reconstruction does not depend on the input $x_w$, thus there is no information about the watermark in $\hat{x}(0)$, i.e., the trade-off function is $f(\epsilon_1)=1-\epsilon_2$ --- perfectly watermark-free, however, the information about $x$ (through $x_w$) is also lost.  When $\alpha(t^*) = 1$, the attack trivially returns $\hat{x}(0) = x_w$, which does not change the performance of the original watermark detection algorithm at all (and it could be perfect, i.e., $\epsilon_1 = \epsilon_2 = 0$).
\end{remark}

\begin{remark}[Choice of $t^*$]
    In practice, the best choice $t^*$ is in between the two trivial cases, i.e., one should choose it such that $L_{x,w}\Delta\sqrt{\alpha(t^*)/(1-\alpha(t^*))}$ is a small constant.  The smaller the constant, the more thoroughly the watermark is removed. The larger the constant, the higher the fidelity of the regenerated image w.r.t. $x_w$ (thus $x_0$ too).
\end{remark}

\begin{remark}[VAE]
Strictly speaking, Theorem~\ref{thm:certified_removal} does not directly apply to VAE because the noise added on the latent embedding depends on the input data. So if it chooses $
\sigma(x_0) = 1$ and $\sigma(x_w)=0$, then it is easy to distinguish between the two distributions. We can still provide provable guarantees for VAE if either the input image is artificially perturbed (so VAE becomes a denoising algorithm) or the latent space is artificially perturbed after getting the embedding vector. When $\sigma(x)$ itself could be stable, more advanced techniques from differential privacy such as Smooth Sensitivity or Propose-Test-Release can be used to provide certified removal guarantees for the VAE attack. In practice, we find that the VAE attack is very effective in removing watermarks \emph{as is} without adding additional noise. %It is possible that locally near $x_0$ the Lipschitz constant $L$ is small which reduces the amount of noise needed.
\end{remark}

\begin{remark}[The role of embedding function $\phi$]
Readers may wonder why having an embedding function $\phi$ is helpful for removing watermarks. We give three illustrative examples. 

% \begin{description}
\textbf{Pixel quantization.} This $\phi$ is effective against classical Least Significant Bit (LSB) watermarks. By removing the lower-significance bits $\phi(x) = \phi(x_w)$ thus $L_{x,w} = 0$.

\textbf{Low-pass filtering.} By choosing $\phi$ to be a low-pass filter, one can effectively remove or attenuate watermarks injected in the high-frequency spectrum of the Fourier domain, hence resulting in a $L_{x,w} \ll 1$ for these watermarks.

\textbf{Deep-learning-based image embedding.} Modern deep-learning-based image models effectively encode a ``natural image manifold,'' which allows a natural image $x$ to pass through while making the added artificial watermark $\epsilon$ smaller.  To be more concrete,  consider $\phi$ to be a linear projection to a $d$-dimensional ``natural image subspace''. For a natural image $x$ and watermarked image $x_w = x + \varepsilon$, we have $\phi(x_w) = \phi(x) + \phi(\varepsilon) =  x + \phi(\varepsilon)$.  If $\varepsilon\sim \cN(0,\sigma_w^2 I_n)$ then   $\E[\|\phi(\varepsilon)\|^2] = d \sigma_w^2 \ll n\sigma_w^2 =\E[\|\varepsilon\|^2]$. This projection compresses the magnitude of the watermark substantially while preserving the \emph{signal}, thereby boosting the effect of the noise added in the embedding space in obfuscating the differences between watermarked and unwatermarked images.
 % \end{description}
\end{remark}

Finally, we note that while Theorem~\ref{thm:certified_removal} and \ref{thm:utility_extension} are %specific to the Certified Watermark-Freeness of stable-diffusion and 
specific to $\ell_2$-distance, the general idea applies to %other regeneration / denoising procedures (e.g., BM3D or TV-denoising) and 
other distance functions (e.g., $\ell_1$ distance).   %We will explain it in more detail in the appendix.
$\ell_2$-distance is natural for the Gaussian noise natively introduced by diffusion and VAE-based regeneration attacks.

\subsection{Utility Guarantees}

In this section, we prove that the regenerated image $\hat{x}$ is close to the original (unwatermarked) image $x_0$. This is challenging because the denoising algorithm only gets access to the noisy version of the watermarked image. 

Interestingly, we can obtain a general extension lemma showing that for any black-box generative model that can successfully denoise a noisy yet unwatermarked image with high probability, the same result also applies to the watermarked counterpart, except that the failure probability is slightly larger. 
\begin{theorem}
    Let $x_0$ be an image with $n$ pixels and $\phi:\R^n \rightarrow \R^d$ be an embedding function. Let $\cA$ be an image generation / denoising algorithm such that with probability at least $1-\delta$, $\|\cA(\phi(x_0) + \cN(0,\sigma^2 I_d)) - x_0\| \leq \xi_{x_0,\sigma,\delta}$. Then for any $\Delta$-invisible watermarking scheme that produces $x_w$ from a clean image $x_0$, then $\hat{x} = \cA(\phi(x_w) + \cN(0,\sigma^2 I_d))$ satisfies that
    $$
    \|\hat{x} - x_0\| \leq \xi_{x_0,\sigma,\delta}
$$
with a probability at least $1 - \tilde{\delta}$ where
$$
\tilde{\delta} = \min_{v\in \R} \left\{\delta \cdot  e^{v} + \Phi\left(\frac{\tilde{\Delta} }{2\sigma} - \frac{v\sigma}{\tilde{\Delta}} \right) - e^v \Phi\left(-\frac{\tilde{\Delta}}{2\sigma} - \frac{v\sigma}{\tilde{\Delta}}\right)\right\} 
$$
in which $\Phi$ denotes the standard normal CDF and $\tilde{\Delta}:= L_{x_0,w} \Delta$.
\end{theorem}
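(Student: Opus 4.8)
The plan is to recognize Theorem~\ref{thm:utility_extension} as a statement about transferring a high-probability event between two Gaussian-shifted input distributions, which is exactly the setting handled by the hockey-stick divergence (equivalently $(\epsilon,\delta)$-DP / Gaussian DP) machinery that already underlies Theorem~\ref{thm:certified_removal}.

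First I would set up the two distributions. Let $P$ be the law of the random input $\phi(x_0)+\cN(0,\sigma^2 I_d)$ and $Q$ the law of $\phi(x_w)+\cN(0,\sigma^2 I_d)$. These are isotropic Gaussians on $\R^d$ with common covariance $\sigma^2 I_d$ whose means differ by $\|\phi(x_w)-\phi(x_0)\|\le L_{x_0,w}\|x_w-x_0\|\le L_{x_0,w}\Delta=\tilde\Delta$, combining the Local Watermark-Specific Lipschitz property with $\Delta$-invisibility. Because the log-likelihood ratio $\log(dQ/dP)$ depends on the input only through its scalar projection onto the direction $\phi(x_w)-\phi(x_0)$, the comparison reduces by rotational invariance to the one-dimensional pair $\cN(0,\sigma^2)$ versus $\cN(\tilde\Delta,\sigma^2)$.

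Next I would compute the hockey-stick divergence explicitly. Writing $H_{e^v}(Q\|P):=\sup_E\bigl[Q(E)-e^v P(E)\bigr]$, a direct one-dimensional calculation (the threshold region $\{dQ/dP>e^v\}$ is a half-line $\{y>\sigma^2 v/\tilde\Delta+\tilde\Delta/2\}$) shows that for every $v\in\R$,
\[
H_{e^v}(Q\|P)=\Phi\!\left(\tfrac{\tilde\Delta}{2\sigma}-\tfrac{v\sigma}{\tilde\Delta}\right)-e^{v}\,\Phi\!\left(-\tfrac{\tilde\Delta}{2\sigma}-\tfrac{v\sigma}{\tilde\Delta}\right)=:\delta(v),
\]
which is precisely the quantity inside the minimization in the theorem; equivalently, $(P,Q)$ is $\mu$-GDP with $\mu=\tilde\Delta/\sigma$, and $\delta(v)$ is the standard GDP-to-$(\epsilon,\delta)$ conversion of \cite{dong2022gaussian}. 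I would then handle the algorithm $\cA$ via post-processing: the reconstruction $\hat x=\cA(\text{noisy input})$ is a (possibly randomized) measurable map of the privatized input, and the hockey-stick divergence does not increase under post-processing, so $\prob_Q[\hat x\in E]\le e^{v}\prob_P[\hat x\in E]+\delta(v)$ for every output event $E$ and every $v$. Taking $E=\{\hat x:\|\hat x-x_0\|>\xi_{x_0,\sigma,\delta}\}$ and using the hypothesis $\prob_P[E]\le\delta$ gives $\prob_Q[E]\le e^{v}\delta+\delta(v)$; minimizing over $v\in\R$ yields failure probability at most $\tilde\delta$, and the complementary event is the claimed utility bound.

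The main obstacle is conceptual rather than computational: recognizing that this is a differential-privacy transfer problem and invoking post-processing invariance rigorously for an \emph{arbitrary}, possibly randomized, black-box $\cA$ (so that the bound applies to whatever complicated failure event the generative model induces). Once that framing is in place, the remaining pieces---the rotational reduction to one dimension and the explicit Gaussian hockey-stick computation matching $\delta(v)$---are routine, and the optimization over $v$ being over all of $\R$ (rather than $v\ge 0$) is justified because the hockey-stick identity above holds for every real $v$.
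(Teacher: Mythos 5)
Your proposal is correct and follows essentially the same route as the paper: reduce to distinguishing the two Gaussians $\cN(\phi(x_0),\sigma^2 I)$ and $\cN(\phi(x_w),\sigma^2 I)$ with mean shift at most $\tilde\Delta$, convert that to a $(v,w)$-indistinguishability statement with exactly the stated $w$, apply post-processing invariance to the black-box $\cA$, and instantiate the event $\|\hat x - x_0\|>\xi_{x_0,\sigma,\delta}$. The only cosmetic difference is that you derive the Gaussian hockey-stick formula directly from the one-dimensional likelihood-ratio computation, whereas the paper cites the dual characterization of the Gaussian mechanism from Balle and Wang (2018); these are the same fact.
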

The theorem says that if a generative model is able to denoise a noisy version of the original image, then the corresponding watermark-removal attack using this generative model provably produces an image with similar quality.

\begin{corollary}
The expression for $\tilde{\delta}$ above can be (conservatively) simplified to
$$
\tilde{\delta} \leq  e^{\frac{L_{x,w}^2\Delta^2}{\sigma^2}}\cdot  \delta^{1/2}.
$$
For example, if $\sigma \asymp L_{x,w}\Delta$, then this is saying that if $\xi_{x_0,\sigma,\delta}$ depends logarithmically on $1/\delta$, the same exponential tail holds for denoising the watermarked image.
\end{corollary}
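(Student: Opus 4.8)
The plan is to read Corollary~\ref{cor:simple_formula} as a clean closed-form upper bound on the failure probability $\tilde\delta$ of Theorem~\ref{thm:utility_extension}, and to prove it by reducing everything to a two-point Gaussian hypothesis-testing problem. Let $P=\mathcal{N}(\phi(x_0),\sigma^2 I_d)$ and $Q=\mathcal{N}(\phi(x_w),\sigma^2 I_d)$ be the distributions fed into the (possibly randomized) regeneration map $\cA$, with densities $p,q$. By the local Lipschitz assumption the mean gap is $\|\phi(x_w)-\phi(x_0)\|\le L_{x_0,w}\Delta=\tilde\Delta$. Let $B$ be the ``bad'' event $\{\|\cA(\cdot)-x_0\|>\xi_{x_0,\sigma,\delta}\}$, so the hypothesis of Theorem~\ref{thm:utility_extension} is $P(B)\le\delta$ and the goal is to bound $Q(B)$. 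The single quantity that produces the advertised exponential factor is the order-two R\'enyi (equivalently $\chi^2$) moment of the likelihood ratio, which for two unit-scaled Gaussians at mean gap $\tilde\Delta$ is $\E_{P}[(q/p)^2]=e^{\|\phi(x_w)-\phi(x_0)\|^2/\sigma^2}\le e^{L_{x_0,w}^2\Delta^2/\sigma^2}$.

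The cleanest route I would take is a direct Cauchy--Schwarz (R\'enyi-$2$) step rather than grinding through the explicit $\min_v$ expression. Writing $Q(B)=\E_{P}[(q/p)\,\mathbf{1}_B]$ and applying Cauchy--Schwarz gives $Q(B)\le \sqrt{\E_P[(q/p)^2]}\cdot\sqrt{P(B)}\le e^{L_{x_0,w}^2\Delta^2/(2\sigma^2)}\,\delta^{1/2}$. This already delivers the $\delta^{1/2}$ tail together with an exponential prefactor, and the only remaining move is the \emph{conservative} one flagged in the statement: enlarging the exponent from $\tfrac12\,L_{x,w}^2\Delta^2/\sigma^2$ to the stated $L_{x,w}^2\Delta^2/\sigma^2$, which is harmless since $e^{\mu^2/2}\le e^{\mu^2}$. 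I would present this as the main proof because it is uniform in all parameters and requires no case analysis.

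To connect explicitly with the $\min_v$ formula in Theorem~\ref{thm:utility_extension}, I would also record the alternative derivation: since $\tilde\delta$ is a minimum over $v$, it suffices to evaluate the bracket at $v=\tfrac12\log(1/\delta)$, which makes the first term equal $\delta\,e^{v}=\delta^{1/2}$, while the two $\Phi$-terms are exactly the Gaussian hockey-stick divergence $g(v)=Q(S)-e^vP(S)$ with $S$ the super-level set $\{q/p>e^v\}$; bounding $g(v)\le Q(S)=\Pr_{Q}[q/p>e^v]\le e^{-v}\,\E_{P}[(q/p)^2]=e^{L_{x,w}^2\Delta^2/\sigma^2}\delta^{1/2}$ via Markov on the likelihood ratio reproduces the same factor. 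The main obstacle lives in this second route and concerns the \emph{exact} constant: discarding the subtracted term $e^vP(S)$ yields only $(1+e^{L_{x,w}^2\Delta^2/\sigma^2})\delta^{1/2}$, and recovering the precise prefactor uniformly in $\tilde\Delta/\sigma$ (in particular for small mean gaps, where the crude tail and Markov bounds are loose) requires retaining that subtracted Gaussian term through a careful tail estimate. This is precisely why the statement is phrased as a conservative simplification, and why I would anchor the writeup on the Cauchy--Schwarz argument, which avoids the delicate bookkeeping entirely.
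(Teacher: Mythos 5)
Your Cauchy--Schwarz argument is correct and is essentially the paper's own proof made self-contained: the paper obtains the bound by invoking the order-$u$ R\'enyi-divergence-to-$(\epsilon,\delta)$ conversion for the Gaussian mechanism and setting $u=2$, and that conversion at $u=2$ is precisely your Cauchy--Schwarz step with the second moment $\E_P\left[(q/p)^2\right]=e^{\|\phi(x_w)-\phi(x_0)\|^2/\sigma^2}\leq e^{L_{x,w}^2\Delta^2/\sigma^2}$ (for randomized $\cA$ one replaces $\mathbf{1}_B$ by the conditional acceptance probability $h(z)\in[0,1]$ and uses $h^2\leq h$, a cosmetic fix). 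Both routes actually give the sharper exponent $L_{x,w}^2\Delta^2/(2\sigma^2)$, so your final enlargement of the exponent is exactly the slack the paper absorbs under the word ``conservatively.''
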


The above result is powerful in that it makes no assumption about what perturbation the watermarking schemes could inject and which image generation algorithm we use. We give a few examples below.

%\begin{remark}[Provably bounds on denoising] 
For denoising algorithms with theoretical guarantees, e.g., TV-denoising \citep[Theorem 2]{hutter2016optimal}, our results imply provable guarantees on the utility for the watermark removal attack of the form, ``w.h.p.,  $\frac{1}{n}\|\hat{x} - x_0\|^2 = \tilde{O}\left(  \frac{\sigma  \mathrm{TV}_{2d}(x_0)}{n}\right)$'', i.e., vanishing mean square error (MSE) as $n$ gets bigger.

For modern deep learning-based image denoising and generation algorithms where worst-case guarantees are usually intractable, Theorem~\ref{thm:utility_extension} is still applicable for each image separately. That is to say, as long as their empirical denoising quality is good on an unwatermarked image, the quality should also be good on its watermarked counterpart.
%\end{example}

%If stable diffusion can achieve a certain level of PSNR in denoising white-noise contaminated original image,  our result shows that it can also obtain the same level of PSNR from the regeneration attack that is adversarially perturbed by an unknown invisible watermark.  

\section{Proofs of Technical Results}\label{app:proofs}
\begin{proof}[Proof of Theorem~\ref{thm:certified_removal}]
%First, observe that after $t^*$, diffusion models can be viewed as adding Gaussian noise with standard deviation $\sigma = \sqrt{\frac{1-\alpha(t^*)}{\alpha(t^*)}}$ to $\phi(x_w)$.
By the conditions on the invisible watermark and the local Lipschitz assumption on $\phi$, we get that
$$
\|\phi(x_0) - \phi(x_w)\|_2 \leq L_{x_0,w} \Delta.
$$
This can be viewed as the 
$\ell_2$-local-sensitivity of $\phi$ at $x_0$ in the language of differential privacy literature.

The two candidate distributions are $\cN(\phi(x_0),\sigma^2 I)$ and $\cN(\phi(x_w),\sigma^2 I)$. Let $T(P,Q): [0,1] \rightarrow [0,1]$ be the tradeoff function for distinguishing between distributions $P$ and $Q$ as in Definition~2.1 of \cite{dong2022gaussian}. The tradeoff function outputs the Type II error of the likelihood ratio test of the two point hypotheses as a function of the Type I error.  By the Neyman-Pearson lemma, the likelihood ratio test is uniform most powerful, thus the tradeoff function provides a lower bound of any test in distinguishing $P$ and $Q$.

First notice that translation and scaling by any non-zero constant (a non-zero linear transformation) does not change the tradeoff function, thus
$$
T\left(\cN(\phi(x_0),\sigma^2 I), \cN(\phi(x_w),\sigma^2 I )\right) = T\left(\cN(0,I), \cN(\frac{\phi(x_w)-\phi(x_0)}{\sigma^2},I)\right)
$$
Next, observe that the likelihood ratio of the two multivariate isotropic normal distributions remains the same when we collapse it to the 1-dimension along the vector $\phi(x_0) - \phi(x_w)$, i.e., 
\begin{equation}\label{eq:reduction_to_1d}
    T\left(\cN(0,I), \cN(\frac{\phi(x_w)-\phi(x_0)}{\sigma^2},I)\right) = T\left(\cN(0,1), \cN(\frac{\|\phi(x_w)-\phi(x_0)\|}{\sigma^2},1)\right).
\end{equation}

Lemma 2.9 of \cite{dong2022gaussian} states that for any post-processing function $h$,  
\begin{equation}\label{eq:postprocessing_tradeoff_func}
    T(h(P), h(Q))\geq  T(P,Q)
\end{equation}

which is known to information theorists as the information processing inequality of the Hockey-Stick divergence.

Consider a particular randomized post-processing function $h$ that first adds $\cN(0, v)$ then divides $\sqrt{1+v}$.
$$
T\left(\cN(0,1), \cN(\frac{L_{x_0,w} \Delta}{\sigma^2},1)\right) \leq T\left(\cN(0,1), \cN( \frac{L_{x_0,w} \Delta}{\sigma^2 (\sqrt{1+v})},1)\right) \explain{=}{\text{for a specific }v} \eqref{eq:reduction_to_1d}
$$
To see the last identity, observe that we can choose $v>0$ such that $\frac{L_{x_0,w} \Delta}{\sigma^2 (\sqrt{1+v})} = \frac{\|\phi(x_w)-\phi(x_0)\|}{\sigma^2}$, thanks to the local Lipschitz assumption. To put things together, we get
$$T\left(\cN(0,1), \cN(\frac{L_{x_0,w} \Delta}{\sigma^2},1)\right) \leq T\left(\cN(\phi(x_0),\sigma^2 I), \cN(\phi(x_w),\sigma^2 I )\right)$$

The left hand side is the tradeoff function of two univariate Gaussian with variance $1$, by Equation (6) of \cite{dong2022gaussian}, the tradeoff function can be written as 
$
\Phi\left(\Phi^{-1}\left(1-\alpha\right) -  \frac{L_{x_0,w} \Delta}{\sigma^2}\right)
$
where $\Phi$ is the cumulative density function of the standard Gaussian distribution. 

Finally, by applying the the postprocessing property \eqref{eq:postprocessing_tradeoff_func} again by instantiating $h$ to be the re-generation procedure, we get that 
%Finally, by \cite{dong2022gaussian}[Proposition 2.8] (postprocessing), the tradeoff function for
the regenerated image $\hat{x}$ also satisfies the same tradeoff function as stated above.
\end{proof}

\begin{proof}[Proof of Theorem~\ref{thm:utility_extension}]
The key idea is to use the definition of indistinguishability (differential privacy, but for a fixed pair of neighbors, rather than for all neighbors).  So we say two input $x,x'$ are $(v,w)$-indistinguishable using the output of a mechanism $\cM$ if for any event $S$, 
$$
\Pr[\cM(x)\in S] \leq e^v \Pr[\cM(x')\in S]  + w.
$$
and the same also true when $x,x'$ are swapped. 
In our case, we have already shown that (from the proof of Theorem~\ref{thm:certified_removal}) for any post-processing algorithm $\cA$, $x_0$ and $x_w$ are indistinguishable using $\hat{x}$ in the trade-off function sense. \cite{balle2018improving} obtained a ``dual'' characterization which says that the same Gaussian mechanism satisfies $(v,w)$-indistinguishability with
$$
w = \Phi\left(\frac{\tilde{\Delta} }{2\sigma} - \frac{v\sigma}{\tilde{\Delta}} \right) - e^v \Phi\left(-\frac{\tilde{\Delta}}{2\sigma} - \frac{v\sigma}{\tilde{\Delta}}\right)
$$
for all $v\in \R$.  By instantiating the event $S$ to be that $\|\hat{x} - x_0\|> \xi_{x_0,\sigma,\delta}$, then we get 
\begin{align*}
\Pr_{x_w}[\|\hat{x} - x_0\|> \xi_{x_0,\sigma,\delta}] & \leq e^v \Pr_{x_0}[\|\hat{x} - x_0\| > \xi_{x_0,\sigma,\delta}] + w \\
& = e^v \cdot \delta + w. 
\end{align*}

This completes the proof.
\end{proof}

\begin{proof}[Proof of Corollary~\ref{cor:simple_formula}]
The $w,v$  ``privacy profile'' implies a Renyi-divergence bound (one can also get that directly from the Renyi-DP of gaussian mechanism) which implies (by Proposition 10 of \cite{mironov2017renyi}) that $$\tilde{\delta} \leq \min_{u\geq 1} \left(e^{\frac{u \tilde{\Delta}^2}{2\sigma^2}}\cdot\delta\right)^{(u-1)/u}.  $$
The stated result is obtained by setting $u = 2$.
\end{proof}

\section{Defense with Semantic Watermarks}
\label{sec:def}

% \begin{figure}[htb]
%     \centering
%     \includegraphics[width=\linewidth]{fig/l2_dist_latent.pdf}
%     \caption{l2 distance in latent}
%     \label{fig:l2_dis_latent}
% \end{figure}

In this section, we discuss possible defenses that are resilient to the proposed attack, and although Theorem~\ref{thm:certified_removal} has guaranteed that no detection algorithm will be able to detect the watermark after our attack, the guarantee is based on the invisibility with respect to $\ell_2$ distance. Therefore, by relaxing that invisibility constraint and thus making the watermark more visible, we may be able to prevent the watermark from being removed. One less-harmful way to loosen the invisibility constraint is with semantic watermarks. As shown in Figure \ref{fig:three_wm}, pixel-based watermarks such as DwtDctSvd keep the image almost intact, while semantic watermarks change the image significantly but retain its content.

\subsection{Tree-Ring Watermarks}
Tree-Ring Watermarking \citep{wen2023tree} is a new technique that robustly fingerprints diffusion model outputs in a way that is semantically hidden in the image space.  An image with a Tree-Ring watermark does not look the same as the image, but it is semantically similar (in Figure \ref{fig:three_wm}, both the original and the semantically watermarked image contain an astronaut riding a horse in Zion National Park). Unlike existing methods that perform post-hoc modifications to images after sampling, Tree-Ring Watermarking subtly influences the entire sampling process, resulting in a model fingerprint. The watermark embeds a pattern into the initial noise vector used for sampling. These patterns are structured in Fourier space so that they are invariant to convolutions, crops, dilations, flips, and rotations. After image generation, the watermark signal is detected by inverting the diffusion process to retrieve the noise vector, which is then checked for the embedded signal. \cite{wen2023tree} demonstrated that Tree-Ring Watermarking can be easily applied to arbitrary diffusion models, including text-conditioned Stable Diffusion, as a plug-in with negligible loss in FID. 

\begin{figure}[t]
     \centering
     \begin{subfigure}[b]{0.3\textwidth}
         \centering
         \includegraphics[width=\textwidth]{fig/astro_.png}
         \caption{No Watermark}
         \label{fig:no_wm}
     \end{subfigure}
     \hfill
     \begin{subfigure}[b]{0.3\textwidth}
         \centering
         \includegraphics[width=\textwidth]{fig/astro_.png}
         \caption{Pixel WM}
         \label{fig:dds_wm}
     \end{subfigure}
     \hfill
     \begin{subfigure}[b]{0.3\textwidth}
         \centering
         \includegraphics[width=\textwidth]{fig/astro_tr.png}
         \caption{Semantic WM}
         \label{fig:tr_wm}
     \end{subfigure}
     \vspace{-0.5em}
        \caption{The image with a pixel-based watermark such as DwtDctSvd looks almost the same as the original. The image with a semantic watermark such as Tree-Ring contains the same content but is visibly different from the original. Original image generated with the prompt ``an astronaut riding a horse in Zion National Park" from \cite{wen2023tree}.}
        \label{fig:three_wm}
\end{figure}

\begin{table}[ht]
\small
\caption{Tree-Ring watermarks are robust against the regeneration attacks.}
\label{tab:treering_defense}
\centering
\begin{tabular}{lcc}
\toprule
% & \multicolumn{2}{c}{\textbf{COCO Dataset}} & \multicolumn{2}{c}{\textbf{Generated Dataset}} \\
Attacker & MS-COCO TPR@FPR=0.01$\downarrow$ & SDP Generated TPR@FPR=0.01$\downarrow$ \\
\midrule
Brightness-2 &  1.000 & 1.000 \\
Contrast-2 &  1.000 & 1.000 \\
JPEG-50  & 1.000 & 0.994 \\
Gaussian noise-5 & 1.000 & 0.996 \\
Gaussian blur-6 & 1.000 & 1.000 \\
\rowcolor[gray]{.85}VAE-Bmshj2018-3 & 0.998 & 0.994 \\
\rowcolor[gray]{.85}VAE-Cheng2020-3 & 1.000 & 0.994 \\
\rowcolor[gray]{.85}Diffusion model-60  & 1.000 & 0.998 \\
\bottomrule
\end{tabular}

\end{table}

\begin{figure}[htbp]
    \centering
    \includegraphics[width=0.6\linewidth]{fig/l2_dist.pdf}
    \vspace{-1em}
    \caption{The $\ell_2$ distances between original images and watermarked ones. Tree-Ring watermarked images are much more different in both pixel space and latent space, making Tree-Ring a {\it visible} watermark.}
    \label{fig:l2_dis}
\end{figure}

\subsection{Defense Experiments}
To evaluate Tree-Ring as an alternative watermark, we use the same datasets from the previous experiments in Section \ref{sec:exp} - MS-COCO and SDP. However, since Tree-Ring adds watermarks during the generation process of diffusion, it cannot directly operate on AI-generated images. Instead, it needs textual inputs that describe the content of the images. We use captions from MS-COCO and the user prompts of SDP datasets as the input prompts. The selected set of attacks (including our proposed attack) is applied to the Tree-Ring watermarked images.

As shown in Table \ref{tab:treering_defense}, Tree-Ring watermarks show exceptional robustness against all the attacks tested. However, such robustness does not come for free. We depict the $\ell_2$-distances between original images and watermarked ones in Figure \ref{fig:l2_dis}. Images with Tree-Ring watermarks are significantly more different from the original images in both the pixel space and the latent space. These results indicate that Tree-Ring, as an instance of semantic watermarks, shows the potential to be an alternative solution to the image watermarking problem. However, as our theory predicts, the robustness comes at the price of more visible differences.

%\end{example}

%If stable diffusion can achieve a certain level of PSNR in denoising white-noise contaminated original image,  our result shows that it can also obtain the same level of PSNR from the regeneration attack that is adversarially perturbed by an unknown invisible watermark.  

%%%%%%%%%%%%%%%%%%%%%%%%%%%%%%%%%%%%%%%%%%%%%%%%%%%%%%%%%%%%

\end{document}